  \providecommand\BibTeX{{%
    \normalfont B\kern-0.5em{\scshape i\kern-0.25em b}\kern-0.8em\TeX}}}
\def\eqref#1{equation~\ref{#1}}
\def\1{\bm{1}}
\def\vone{{\bm{1}}}
\def\vv{{\bm{v}}}
\def\evv{{v}}
\def\mA{{\bm{A}}}
\def\mB{{\bm{B}}}
\def\mC{{\bm{C}}}
\def\mD{{\bm{D}}}
\def\mE{{\bm{E}}}
\def\mI{{\bm{I}}}
\def\mP{{\bm{P}}}
\def\mV{{\bm{V}}}
\def\mPi{{\bm{\Pi}}}
\DeclareMathAlphabet{\mathsfit}{\encodingdefault}{\sfdefault}{m}{sl}
\SetMathAlphabet{\mathsfit}{bold}{\encodingdefault}{\sfdefault}{bx}{n}
\def\gE{{\mathcal{E}}}
\def\gV{{\mathcal{V}}}
\newcommand{\E}{\mathbb{E}}
\newcommand{\R}{\mathbb{R}}
\DeclareMathOperator*{\argmax}{arg\,max}
\DeclareMathOperator{\Tr}{Tr}
\newtheorem{thm}{Theorem}[section]
\newtheorem*{thm*}{Theorem}
\newtheorem{corollary}[thm]{Corollary}
\newtheorem*{corollary*}{Corollary}
\newtheorem{lemma}[thm]{Lemma}
\newtheorem*{lemma*}{Lemma}
\patchcmd{\maketitle}{\@copyrightpermission}{
  \begin{minipage}{0.2\columnwidth}
    \href{http://creativecommons.org/licenses/by/4.0/}{\includegraphics[width=0.90\textwidth]{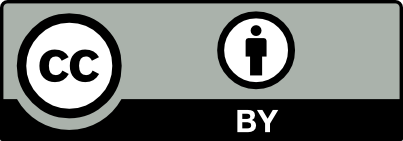}}
  \end{minipage}\hfill
  \begin{minipage}{0.8\columnwidth}
    \href{http://creativecommons.org/licenses/by/4.0/}{This work is licensed under a Creative Commons Attribution International 4.0 License.}
  \end{minipage}
 
  \vspace{5pt}
}{}{}
\begin{document}
\fancyhead{}

\title{On the Effect of Triadic Closure on Network Segregation}

\author{Rediet Abebe}
\email{rabebe@berkeley.edu}
\affiliation{
  \institution{University of California, Berkeley}
  \state{CA}
  \country{USA}
}

\author{Nicole Immorlica}
\email{nicimm@gmail.com}
\affiliation{
  \institution{Microsoft Research}
  \state{MA}
  \country{USA}
}

\author{Jon Kleinberg}
\email{kleinberg@cornell.edu}
\affiliation{
  \institution{Cornell University}
  \state{NY}
  \country{USA}
}

\author{Brendan Lucier}
\email{brlucier@microsoft.com}
\affiliation{
  \institution{Microsoft Research}
  \state{MA}
  \country{USA}
}

\author{Ali Shirali}
\email{shirali\_ali@berkeley.edu}
\orcid{0000-0003-3750-0159}
\affiliation{
  \institution{University of California, Berkeley}
  \state{CA}
  \country{USA}
}

\renewcommand{\shortauthors}{Abebe, Immorlica, Kleinberg, Lucier, and Shirali}

\begin{abstract}
The tendency for individuals to form social ties with others who are similar to themselves, known as homophily, is one of the most robust sociological principles. Since this phenomenon can lead to patterns of interactions that segregate people along different demographic dimensions, it can also lead to inequalities in access to information, resources, and opportunities. As we consider potential interventions that might alleviate the effects of segregation, we face the challenge that homophily constitutes a pervasive and organic force that is difficult to push back against. Designing effective interventions can therefore benefit from identifying counterbalancing social processes that might be harnessed to work in opposition to segregation. 

In this work, we show that triadic closure---another common phenomenon that posits that individuals with a mutual connection are more likely to be connected to one another---can be one such process. In doing so, we challenge a long-held belief that triadic closure and homophily work in tandem. By analyzing several fundamental network models using popular integration measures, we demonstrate the desegregating potential of triadic closure. We further empirically investigate this effect on real-world dynamic networks, surfacing observations that mirror our theoretical findings. We leverage these insights to discuss simple interventions that can help reduce segregation in settings that exhibit an interplay between triadic closure and homophily. We conclude with a discussion on qualitative implications for the design of interventions in settings where individuals arrive in an online fashion, and the designer can influence the initial set of connections. 
\end{abstract}

\begin{CCSXML}
<ccs2012>
   <concept>
       <concept_id>10003752.10010070.10010099.10010110</concept_id>
       <concept_desc>Theory of computation~Network formation</concept_desc>
       <concept_significance>500</concept_significance>
       </concept>
   <concept>
       <concept_id>10003752.10010061.10010069</concept_id>
       <concept_desc>Theory of computation~Random network models</concept_desc>
       <concept_significance>500</concept_significance>
       </concept>
   <concept>
       <concept_id>10003752.10010070.10010099.10003292</concept_id>
       <concept_desc>Theory of computation~Social networks</concept_desc>
       <concept_significance>500</concept_significance>
       </concept>
   <concept>
       <concept_id>10002950.10003624.10003633.10003638</concept_id>
       <concept_desc>Mathematics of computing~Random graphs</concept_desc>
       <concept_significance>500</concept_significance>
       </concept>
   <concept>
       <concept_id>10010147.10010341.10010346.10010348</concept_id>
       <concept_desc>Computing methodologies~Network science</concept_desc>
       <concept_significance>300</concept_significance>
       </concept>
 </ccs2012>
\end{CCSXML}

\ccsdesc[500]{Theory of computation~Network formation}
\ccsdesc[500]{Theory of computation~Random network models}
\ccsdesc[500]{Theory of computation~Social networks}
\ccsdesc[500]{Mathematics of computing~Random graphs}
\ccsdesc[300]{Computing methodologies~Network science}

\keywords{segregation, access to information, triadic closure, homophily, random networks, dynamic networks}

\maketitle

\section{Introduction}

Segregation impacts socioeconomic inequality by influencing individuals' abilities to obtain accurate and relevant information, garner social support, and improve access to opportunity~\cite{CalvoJackson,ccz,jackson2012social,banerjee2013diffusion,del2016echo,zeltzer2020gender,stoica2018algorithmic,twitter,taskrabbit}. A number of different social processes can impact segregation. Among these, homophily---the process by which individuals are more likely to form ties with whom they share similarities---is one of the most robust phenomena~\cite{lazarsfeld1954friendship,kossinets2009origins,mcpherson2001birds,mcpherson1987homophily,newman2002assortative,shrum1988friendship}. A long line of theoretical and empirical work shows that homophily can create and amplify existing segregation. And because homophily is a potent and organic force, it is challenging to push back against without harnessing existing social processes that may already be countering its negative effects.

In this work, we show that \emph{triadic closure}---a process in which individuals are more likely to form ties to others with whom they share mutual connections---is one such phenomenon ~\cite{rapoport1953spread,granovetter1977strength,kossinets2006empirical}. That is, we show that triadic closure alleviates segregation in settings where homophily is also present. Our results, which we present for a number of well-studied network formation models, challenge a long-held belief that triadic closure amplifies the effects of homophily. Such claims are frequently made, at times informally, citing concerns that homophily may lead friends-of-friends also to be similar, which would lead to further segregation under triadic closure \cite{asikainen,toth2019inequality,kossinets2009origins}. 
 
Our work challenges this intuition: Triadic closure connects people with mutual ties, and we may therefore assume that these new links reinforce existing patterns. We find, however, that the long-range nature of triadic closure can, in fact, counteract this phenomenon. In settings where homophily is present, individuals who are similar are more likely to form ties. Consequently, if friends-of-friends are not already connected, it may be because they are dissimilar. Triadic closure can therefore expose people to dissimilar individuals, thereby decreasing segregation. 

Mathematically, triadic closure operates on a graph-theoretic structure called a {\em wedge}. Wedges consist of two nodes that have a neighbor in common but are themselves not linked.  Triadic closure works by closing these wedges, i.e., by creating a link between these two nodes such that all three nodes are connected to one another. We analyze the effect of triadic closure on homophily by disaggregating wedges into monochromatic and bichromatic ones. The two nodes sharing a neighbor are of the same type in the case of the former but not the latter. We observe that the effect of triadic closure depends on the relative sizes of monochromatic and bichromatic wedges.  We study this effect both in an \emph{absolute} sense---by looking at whether network integration increases when we close a random wedge---and in a \emph{relative} sense---by comparing the effect of closing a random wedge with that of closing a random edge. 

We provide general results for a number of well-studied models, including the stochastic block model (SBM) and a popular growing network formation model by \citet{jr}, and show that triadic closure can have positive absolute and relative effects on integration in settings where there is homophily. We use these insights to study interventions on the Jackson-Rogers model and find that small changes leveraging the effects of triadic closure can have an outsized effect on mitigating segregation in the long run. We then study the interaction of homophily, triadic closure, and segregation using a large citation network where we estimate the network formation model and find that empirically observed effects of triadic closure on integration closely match our theoretical results. 

Our work also generalizes a number of theoretical contributions on graph and network theory. For instance, we generalize a result about network integration from \citet{jr} to a general network with heterogeneous nodes and with arbitrary distribution over the node types. There we provide general closed-form solutions for the time dynamic of network integration. Putting the relationship between triadic closure and homophily on a theoretical footing to ask these questions from a mathematical lens is a recent undertaking; in one formalization, \citet{asikainen} propose a model that combines triadic closure and random link rewiring with an underlying level of \emph{choice homophily}, in which nodes have a base preference for linking based on similarity. They show that the combination of these forces amplifies existing patterns of homophily. We examine these findings to show that this model introduces homophily even into the triadic closure process itself. We study a general variant of the \citet{asikainen} model and show that triadic closure mitigates segregation when all wedges are equally likely to close under triadic closure. 

The remainder of the paper is organized as follows:
In Section~\ref{sec:sbm}, we present an analysis of triadic closure in the stochastic block model, deriving mathematical
results on its absolute and relative effects on integration. We then introduce and analyze a growing graph model based on the Jackson-Rogers model, considering the effect of triadic closure on its equilibrium state integration in Section~\ref{sec:jr}. We then tackle the design of interventions that act on the initial phase of making friendships to optimize network integration. In Section~\ref{sec:asikainen}, we study a variant of the \citet{asikainen} model and show that in settings where triadic closure is not a priori biased in favor of monochromatic wedges, we obtain results consistent with our above findings. Finally, we study our results empirically using a large citation network and show that we can effectively model the network formation process in Section~\ref{sec:exp}. We also find that the effects of triadic closure on integration closely match our theoretical findings. We close with a discussion of related works as well as the interplay of homophily, triadic closure, integration, and implications for network interventions on- and off-line settings in Sections~\ref{sec:related_works}~and~\ref{sec:discussion}.

\section{Triadic Closure in The Stochastic Block Model}
\label{sec:sbm}

We begin by introducing notations and terminology which we will use throughout this paper: Let $G$ be a heterogeneous network, i.e., a network where nodes have a \emph{type}, which may, for instance, correspond to membership in a demographic group. We assume that there are $K$~types. We denote the type of node $i$ with $type(i)$. We say an edge~$(i,j)$ is \emph{monochromatic} if $type(i) = type(j)$ and \emph{bichromatic} otherwise.

Following convention, we measure network integration using the fraction of bichromatic edges. 
We denote the level of network integration at time~$t$ by~$f(t)$. Smaller values correspond to more-segregated networks.

A triplet of nodes~$(i, h, j)$ is called a \emph{wedge} if there exist edges~$(i, h)$ and $(h, j)$ but not $(i, j)$. A wedge is said to be monochromatic if $i$ and $j$ are of the same type and bichromatic if they are not. As is common in many studies of triadic closure, we assume that all wedges are equally likely to close under triadic closure. This is due to the fact that triadic closure is designed to capture the phenomena where the presence of node~$h$ in the wedge~$(i, h, j)$ impacts whether or not edge $(i, j)$ is eventually formed, regardless of the node types.

\vspace{2mm}
In this section, we study Stochastic Block Models (SBM). Under SBM, we assign independent probabilities to the existence of different edges, where these probabilities depend on the types of the corresponding nodes. Given nodes $i$ and $j$, edge $(i, j)$ is formed with probability~$p \in [0, 1]$ if $i$ and $j$ are of the same type and with probability $q \in [0, 1]$ if they are not. We say there is homophily if and only if $p > q$.

We study the effect of triadic closure in this model post network formation. That is, after the network is formed, we select and close a random wedge and measure the change in network integration. As is common in other studies on the influence of triadic closure, we first study the \textit{absolute effect} by comparing the state of network integration before and after the intervention. Our work also explores the \textit{relative effect} of triadic closure by considering an alternative mechanism as the baseline against which we compare the effect. We propose closing a random edge as this alternative mechanism and define relative effect as the difference in integration resulting from closing a random wedge versus a random edge.

\subsection{Absolute and Relative Effects of Triadic Closure}

We show that triadic closure improves network integration if and only if there is homophily.

\begin{theoremEnd}{thm}
\label{thm:sbm:abs_eff_on_integ}
For any SBM network $G$ with $K \ge 2$ types each consisting of $n_k$ nodes, where $k \in [K]$, for sufficiently large values of $n_k$, triadic closure has positive absolute effect on network integration if and only if $p > q$.
\end{theoremEnd}
\begin{proofEnd}
Let $e_b$ and $e_m$ ($w_b$ and $w_m$) be the expected number of bichromatic and monochromatic edges (wedges) respectively. The expected integration of the network before modification is approximately $\frac{e_b}{e_m+e_b}$.
\footnote{We approximated $\E[\frac{num}{den}]$ by $\frac{\E[num]}{\E[den]}$.}
After closing a random wedge, the expected (network) integration will be approximately $\frac{e_b + w_b/(w_m+w_b)}{e_m+e_b+1}$. For a fixed~$p$, we are looking for the range of~$q$ which increases the current level of integration:
\begin{align}
    \frac{e_b + w_b/(w_m+w_b)}{e_m+e_b+1} &\ge \frac{e_b}{e_m+e_b}   \nonumber \\
    \iff e_m e_b + e_b^2 + \frac{w_b}{w_m+w_b}(e_m + e_b) &\ge e_m e_b + e_b^2 + e_b  \nonumber \\
    \iff \frac{w_b}{w_m} &\ge \frac{e_b}{e_m}
    .
\end{align}

Let's define $n = \sum_{k \in [K]} n_k$ and $m_i = \sum_{k \in [K]} n_k^i$. By plugging $e_m$, $e_b$, $w_m$, and $w_b$ from Lemmas~\ref{lemma:n_edges_sbm}~and~\ref{lemma:n_wedges_sbm} into the above inequality:
\begin{align}
\label{eq:quad_ineq_general_sbm}
    \frac{w_b}{w_m} = \frac{(1 - q)q}{1 - p} \frac{2p(n m_2 - m_3) + q(n^3 + 2m_3 - 3nm_2)}{p^2 m_3 + q^2(nm_2 - m_3)} &\ge
    \frac{q}{p} \frac{n^2 - m_2}{m_2} = \frac{e_b}{e_m} \nonumber \\
    \iff q^2 \Big[-(1 - p)(n^2 - m_2)(n m_2 - m_3) - p m_2\big( n(n^2 - m_2) - 2(n m_2 - m_3) \big) \Big] &+ \nonumber \\
    q p \Big[-2(1 + p)m_2(nm_2 - m_3) + nm_2(n^2 - m_2) \Big] &+ \nonumber \\
    p^2 \Big[2m_2(nm_2-m_3) - (1-p)m_3(n^2 - m_2) \Big] &\ge 0
    .
\end{align}
Here we used the fact that $n m_2 \ge m_3$ from Lemma~\ref{lemma:n_m_ineq}. To determine the feasible region of $q$ we have to find the roots of LHS of Equation~\ref{eq:quad_ineq_general_sbm} which is a quadratic function in $q$. Let's define three new variables to simplify the equations: $A = n^2 - m_2$, $B = nm_2 - m_3$, and $C = m_3A - m_2B = n(nm_3 - m_2^2)$. According to Lemma~\ref{lemma:n_m_ineq}, $A$, $B$, and $C$ are all non-negative variables. The discriminant ($\Delta$) of the quadratic function can be found:
\begin{align}
\label{eq:delta}
    \frac{\Delta}{p^2} &= m_2^2 (-2(1 + p)B + nA)^2 + 4\big[ (1 - p)AB + pm_2(nA - 2B) \big]\big[ 2m_2 B - (1 - p)m_3 A \big] \nonumber \\
    &= m_2^2 (-2(1 - p)B + nA)^2 + 8m_2(1 - p)AB^2 - 4(1 - p)^2 m_3 A^2 B - p(1 - p)m_2 m_3(nA - 2B)A \nonumber \\
    &= (1 - p)^2 \big[ 4m_2^2 B^2 - 4m_3 A^2B + 4m_2 m_3 (nA - 2B)A \big] \nonumber \\
    &+ (1 - p) \big[ -4nm_2^2 A B + 8 m_2 A B^2 - 4m_2 m_3 A(nA - 2B) \big] \nonumber \\
    &+ n^2 m_2^2 A^2 \nonumber \\
    &= 4(1 - p)^2 C^2 - 4(1 - p)n m_2 A C + n^2 m_2^2 A^2 \nonumber \\
    &= \big(n m_2 A - 2(1 - p)C \big)^2
    .
\end{align}

As $\Delta > 0$, the quadratic has always two real roots:
\begin{align}
    q^*_1, q^*_2 &= \frac{p}{2} \frac{2(1 + p)m_2 B - n m_2 A \pm \sqrt{\Delta}}{-(1 - p)AB - p m_2 (n A - 2B)} \nonumber \\
    &= \frac{p}{2} \frac{2(1 - p)(-m_2B) + 4m_2 B - n m_2 A \pm (n m_2 A - 2(1 - p)C)}{(1 - p)(-AB + n m_2 A - 2 m_2 B) - m_2(n A - 2B)} \nonumber \\
    &= p \frac{(1 - p)(-m_2B - \pm C) + m_2 \big(2B - \frac{n A}{2} (1 - \pm 1) \big)}{(1 - p)(C - m_2 B) + m_2(2B - n A)} \nonumber \\
    &= p, \; p \frac{-(1 - p)m_3 A + 2 m_2 B}{(1 - p)(C - m_2 B) + m_2(2B - n A)}
    .
\end{align}
The first root is always $q^*_1 = p$ regardless of the network's structure. For the numerator of the second root, $q^*_2$, we have:
\begin{align}
\label{eq:num_q2}
    -(1 - p) m_3 A + 2 m_2 B &\ge 2 m_2 B - m_3 A \nonumber \\
    &= 2 m_2 (n m_2 - m_3) - m_3 (n^2 - m_2 ) \nonumber \\
    &= 2 n m_2^2 - n^2 m_3 - m_2 m_3 \ge 0
    ,
\end{align}
where we applied Lemma~\ref{lemma:n_m_ineq} to obtain the final inequality. Further, plugging $2 m_2 B \ge m_3 A$ into the denominator of $q^*_2$ gives:
\begin{align}
\label{eq:den_q2}
    (1 - p)(C - m_2 B)  + m_2(2B - n A) &= (1 - p)(m_3 A - 2m_2 B) + m_2(2B - n A) \nonumber \\
    &\le m_2(2B - n A) \nonumber \\
    &= m_2(2(n m_2 - m_3) - n(n^2 - m_2)) \nonumber \\
    &= m_2(n(m_2 - n^2) - 2 m_3) \nonumber \\
    &\le m_2 - n^2 \le 0
    ,
\end{align}
where we used Lemma~\ref{lemma:n_m_ineq} to obtain the last inequality. Note that the denominator of $q^*_2$ is exactly the coefficient of $q^2$ in the quadratic function of Equation~\ref{eq:quad_ineq_general_sbm}. So, Equation~\ref{eq:den_q2} says that this quadratic function is concave. On the other hand, Equations~\ref{eq:num_q2}~and~\ref{eq:den_q2} shows $q^*_2 \le 0$. So, we can conclude that the inequality of Equation~\ref{eq:quad_ineq_general_sbm} holds if and only if $q^*_2 \le 0 \le q \le q^*_1 = p$.

\end{proofEnd}

The proof first shows that closing a random wedge increases network integration if and only if the ratio of bichromatic wedges to monochromatic wedges is larger than the ratio of bichromatic edges to monochromatic edges. We then approximate the number of wedges and edges with their expected values and show that homophily is a necessary and sufficient condition to achieve the stated result. We note that this result holds for \emph{any} number of types as well as for cases where the types may be imbalanced in size, i.e., there may be a majority-minority partition.

Triadic closure may be improving integration simply because we are adding an edge and not because of the type of edge that was added. To untangle the effect of edge addition with that of triadic closure, we turn our attention to the relative effect.

\begin{theoremEnd}{thm}
\label{thm:sbm:rel_eff_on_integ}
Consider the baseline of adding a random edge to an $\text{SBM}$ network~$G$ with $K \ge 2$ types each consisting of $n_k$~nodes, where $k \in [K]$. For sufficiently large values of $n_k$:
\begin{enumerate}
    \item Triadic closure has a negative relative effect on network integration if $p > q$, 
    \item Triadic closure has positive or neutral relative effect on network integration if and only if $q \ge p \ge q \, l^*$, where 
    \begin{equation*}
        l^* = \frac{2(\sum_k n_k)(\sum_k n_k^2)^2 - (\sum_k n_k)^2(\sum_k n_k^3) - (\sum_k n_k^2)(\sum_k n_k^3)}{(\sum_k n_k^3)\big((\sum_k n_k)^2 - (\sum_k n_k^2)\big)} \le 1.
    \end{equation*}
    \end{enumerate}
\end{theoremEnd}
\begin{proofEnd}
Let $e_b$, $e_m$, $o_b$, $o_m$, $w_b$, and $w_m$ be the expected number of bichromatic edges, monochromatic edges, bichromatic missing edges, monochromatic missing edges, bichromatic wedges, and monochromatic wedges, respectively. After closing a random wedge, the expected integration will be approximately $\frac{e_b + w_b/(w_m+w_b)}{e_m+e_b+1}$. Similarly, after adding a random edge, the expected integration will be approximately $\frac{e_b + o_b/(o_m+o_b)}{e_m+e_b+1}$. For a fixed $q$, we are looking for the range of $p$ such that closing a random wedge increases network integration more than adding a random edge:
\begin{align}
    \frac{e_b + w_b/(w_m+w_b)}{e_m+e_b+1} &\ge \frac{e_b + o_b/(o_m+o_b)}{e_m+e_b+1}   \nonumber \\
    \iff \frac{w_b}{w_m+w_b} &\ge \frac{o_b}{o_m+o_b}  \nonumber \\
    \iff \frac{w_b}{w_m} &\ge \frac{o_b}{o_m}
    .
\end{align}

Let's define $n = \sum_{k \in [K]} n_k$ and $m_i = \sum_{k \in [K]} n_k^i$. By plugging $o_m$, $o_b$, $w_m$, and $w_b$ from Lemmas~\ref{lemma:n_edges_sbm}~and~\ref{lemma:n_wedges_sbm} into the above inequality:
\begin{align}
\label{eq:rel_quad_ineq_general_sbm}
    \frac{w_b}{w_m} = \frac{(1 - q)q}{1 - p} \frac{2p(n m_2 - m_3) + q(n^3 + 2m_3 - 3nm_2)}{p^2 m_3 + q^2(nm_2 - m_3)} &\ge
    \frac{(1 - q)}{(1 - p)} \frac{n^2 - m_2}{m_2} = \frac{o_b}{o_m} \nonumber \\
    \iff p^2 \big[-m_3(n^2 - m_2) \big] &+ \nonumber \\
    p q \big[2m_2(n m_2 - m_3)\big] &+ \nonumber \\
    q^2 \big[n m_2(n^2 - m_2) - 2m_2(n m_2 - m_3) - (n^2 - m_2)(n m_2 - m_3) \big] &\ge 0
    .
\end{align}
Here we used the fact that $n m_2 \ge m_3$ from Lemma~\ref{lemma:n_m_ineq}. To determine the feasible region of $p$ we have to find the roots of LHS of Equation~\ref{eq:rel_quad_ineq_general_sbm} which is a quadratic function in~$p$. Let's define three new variables to simplify the equations: $A = n^2 - m_2$, $B = nm_2 - m_3$, and $C = m_3A - m_2B = n(nm_3 - m_2^2)$. According to Lemma~\ref{lemma:n_m_ineq}, $A$, $B$, and $C$ are all non-negative variables. The discriminant ($\Delta$) of the quadratic function is:
\begin{align}
\label{eq:rel_delta}
    \frac{\Delta}{4q^2} &= (m_2 B)^2 + m_3 A (n m_2 A - 2m_2 B - A B) \nonumber \\
    &= m_2 B^2 - 2m_2 m_3 A B + n m_2 m_3 A^2 - m_3 A^2 B \nonumber \\
    &= m_2 B^2 - 2m_2 m_3 A B + m_3 A^2 (n m_2 - B) \nonumber \\
    &= (m_3 A - m_2 B)^2 = C^2
    .
\end{align}

As $\Delta \ge 0$, the quadratic has always two \footnote{Possibly degenerate.} real roots:
\begin{align}
    p^*_1, p^*_2 &= q \frac{-m_2 B \pm C}{-m_3 A} \nonumber \\
    &= q, \; q(\frac{2m_2 B}{m_3 A} - 1)
    .
\end{align}
The first root is always $p^*_1 = q$ regardless of the network's structure. Expanding the ratio $\frac{2m_2 B}{m_3 A}$ appeared in the second root and applying Lemma~\ref{lemma:n_m_ineq}, we can see  $1 \le \frac{2m_2 B}{m_3 A} \le 2$. So, defining $l^* = \frac{2m_2 B}{m_3 A} - 1$, we have $0 \le p^*_2 = q \, l^* \le 1$. On the other hand, the coefficient of the $p^2$ term in the quadratic function of Equation~\ref{eq:rel_quad_ineq_general_sbm} is always negative. Therefore, the quadratic function is concave and $p$ satisfies the inequality of Equation~\ref{eq:rel_quad_ineq_general_sbm} if and only if $q \ge p \ge q \, l^*$. As $p > q$ does not fall into this range, a direct result is that triadic closure has negative relative effect when network is homophilous.

\end{proofEnd}

\vspace{2mm}

For the case of balanced groups (i.e., when the $n_k$ are all equal), Theorem~\ref{thm:sbm:rel_eff_on_integ} simplifies to:

\begin{corollary}
For any $\text{SBM}$ network~$G$ with $K \ge 2$ balanced types each consisting of $n/K$~nodes, for sufficiently large values of $n/K$, triadic closure has a neutral relative effect if $p = q$ and negative relative effect if $p > q$.
\end{corollary}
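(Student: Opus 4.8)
The plan is to obtain this corollary as a direct specialization of Theorem~\ref{thm:sbm:rel_eff_on_integ} to equal group sizes. The entire claim hinges on a single computation: showing that when every $n_k = n/K$, the threshold $l^*$ defined in the theorem equals exactly $1$. Once this is established, the feasible interval $q \ge p \ge q\,l^*$ from part~(2) degenerates to the single point $p = q$, and part~(1) covers the strictly homophilous regime above it.

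First I would evaluate the power sums appearing in $l^*$ under the balanced assumption. Keeping the notation $n = \sum_k n_k$, $m_2 = \sum_k n_k^2$, and $m_3 = \sum_k n_k^3$ from the proof of the theorem, setting $n_k = n/K$ gives $m_2 = n^2/K$ and $m_3 = n^3/K^2$, while $\sum_k n_k = n$ is unchanged. Using the compact form $l^* = \frac{2 m_2 B}{m_3 A} - 1$ with $A = n^2 - m_2$ and $B = n m_2 - m_3$, I would expand the numerator and denominator of the single-fraction expression $l^* = \frac{2 n m_2^2 - m_2 m_3 - n^2 m_3}{m_3(n^2 - m_2)}$ separately. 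A short substitution shows that both the numerator and the denominator collapse to $n^5(K-1)/K^3$, so their ratio is $1$ and hence $l^* = 1$.

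Plugging $l^* = 1$ back into the theorem finishes the argument. Part~(2) states that the relative effect is positive or neutral if and only if $q \ge p \ge q\,l^* = q$, which forces $p = q$; and at $p = q$ the SBM carries no homophily, so the bichromatic-to-monochromatic ratios for wedges and for missing edges coincide, making the governing inequality $\frac{w_b}{w_m} \ge \frac{o_b}{o_m}$ an equality and the relative effect exactly neutral. Part~(1) then supplies the strictly negative relative effect throughout $p > q$. The only delicate point---the nearest thing to an obstacle---is confirming that the boundary value $l^* = 1$ yields genuine neutrality rather than mere non-negativity at $p = q$. This follows because $l^* = 1$ makes the two roots $p_1^* = q$ and $p_2^* = q\,l^*$ of the concave quadratic in Equation~\ref{eq:rel_quad_ineq_general_sbm} coincide into a double root at $p = q$, so the quadratic touches zero only there and is negative elsewhere; equality at the double root is precisely what pins the relative effect to exactly zero.
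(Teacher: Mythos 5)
Your proposal is correct and follows exactly the paper's route: the paper likewise derives the corollary by substituting $n_k = n/K$ into Theorem~\ref{thm:sbm:rel_eff_on_integ} and observing that $l^* = 1$, which your power-sum computation ($m_2 = n^2/K$, $m_3 = n^3/K^2$) verifies. Your added remark on why the double root at $p=q$ gives exact neutrality is a harmless elaboration of the same argument.
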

\begin{proof}
This follows from Theorem~\ref{thm:sbm:rel_eff_on_integ} if we set $n_k = \frac{n}{K}$, which results in $l^* = 1$.
\end{proof}

These above results show that we can obtain diverging conclusions when we consider absolute versus relative effects of triadic closure. In doing so, they highlight the need for further precision in examining the interaction between triadic closure, homophily, and related social phenomena. Namely, to isolate the effect of social phenomena such as triadic closure, we may need to set appropriate baselines against which we are comparing their effect.

We considered adding a random edge as a natural baseline in our setting but also note that a random edge is likely to be bichromatic. Another baseline we may consider is adding a homophilous random edge, i.e., rather than adding a random edge, we favor monochromatic edges using a factor~$\gamma \ge 1$. Let $o_m$ and $o_b$ be the expected number of monochromatic and bichromatic missing edges. The expected increase in the number of bichromatic edges after adding a $\gamma$-homophilous edge is approximately $\frac{o_b}{o_b + \gamma o_m}$. Theorem~\ref{thm:sbm:gamma_rel_eff_on_integ} shows that, compared to adding a homophilous edge, triadic closure has a positive relative effect if the network is sufficiently heterophilous.

\begin{theoremEnd}{thm}
\label{thm:sbm:gamma_rel_eff_on_integ}
Consider the baseline of adding a $\gamma$-homophilous random edge to an SBM network $G$ with $K \ge 2$ types consisting of $n_k$ nodes, where $k \in [K]$. For sufficiently large values of $n_k$, triadic closure has positive or neutral relative effect on network integration if and only if $q \, u(\gamma) \ge p \ge q \, l(\gamma)$, where $u(\gamma) \ge 1$ and $l(\gamma) \le 1$. 
\end{theoremEnd}
\begin{proofEnd}
We use the same notation as the proof of Theorem~\ref{thm:sbm:rel_eff_on_integ}. After closing a random wedge, the expected integration will be approximately $\frac{e_b + w_b/(w_m+w_b)}{e_m+e_b+1}$. After adding a $\gamma$-homophilous random edge, the expected integration will be approximately $\frac{e_b + o_b/(\gamma o_m+o_b)}{e_m+e_b+1}$. For a fixed $q$, we are looking for the range of $p$ such that closing a random wedge increases network integration more than adding a homophilous random edge:
\begin{align}
    \frac{e_b + w_b/(w_m+w_b)}{e_m+e_b+1} &\ge \frac{e_b + o_b/(\gamma o_m+o_b)}{e_m+e_b+1}   \nonumber \\
    \iff \frac{w_b}{w_m+w_b} &\ge \frac{o_b}{\gamma o_m+o_b}  \nonumber \\
    \iff \frac{w_b}{w_m} &\ge \frac{1}{\gamma} \frac{o_b}{o_m}
    .
\end{align}

By plugging $o_m$, $o_b$, $w_m$, and $w_b$ from Lemmas~\ref{lemma:n_edges_sbm}~and~\ref{lemma:n_wedges_sbm} into the above inequality:
\begin{align}
\label{eq:gamma_rel_quad_ineq_general_sbm}
    \frac{w_b}{w_m} = \frac{(1 - q)q}{1 - p} \frac{2p(n m_2 - m_3) + q(n^3 + 2m_3 - 3nm_2)}{p^2 m_3 + q^2(nm_2 - m_3)} &\ge
    \frac{1}{\gamma}\frac{(1 - q)}{(1 - p)} \frac{n^2 - m_2}{m_2} = \frac{1}{\gamma} \frac{o_b}{o_m} \nonumber \\
    \iff p^2 \big[-m_3 A \big] 
    + p q \big[2 \gamma m_2 B \big] 
    + q^2 \big[\gamma n m_2 A - 2 \gamma m_2 B - A B \big] &\ge 0
    .
\end{align}
Here we used the fact that $n m_2 \ge m_3$ from Lemma~\ref{lemma:n_m_ineq}. To determine the feasible region of~$p$, we have to find the roots of LHS of Equation~\ref{eq:gamma_rel_quad_ineq_general_sbm} which is a quadratic function in~$p$. The discriminant ($\Delta$) of the quadratic function is:
\begin{align}
\label{eq:gamma_rel_delta}
    \frac{\Delta}{4q^2} &= (\gamma A B)^2 + m_3 A (\gamma n m_2 A - 2 \gamma m_2 B - A B) \nonumber \\
    &= (\gamma m_2 B)^2 - 2 \gamma m_2 m_3 A B + m_3 A^2 (\gamma n m_2 - B) \nonumber \\
    &= (\gamma m_2 B - m_3 A)^2 + m_3 A^2 (\gamma n m_2 - B - m_3) \nonumber \\
    &= (\gamma m_2 B - m_3 A)^2 + (\gamma - 1) n m_2 m_3 A^2
    .
\end{align}

As $\Delta \ge 0$, the quadratic has always two real roots:
\begin{align}
    p^*_1, p^*_2 &= q \frac{\gamma m_2 B \pm \sqrt{(\gamma m_2 B - m_3 A)^2 + (\gamma - 1) n m_2 m_3 A^2}}{m_3 A} \nonumber \\
    &= q \, u(\gamma), q \, l(\gamma)
    .
\end{align}

We can bound $u(\cdot)$ by
\begin{align}
    u(\gamma) &= \frac{\gamma m_2 B + \sqrt{(\gamma m_2 B - m_3 A)^2 + (\gamma - 1) n m_2 m_3 A^2}}{m_3 A} \nonumber \\
    &\ge \frac{\gamma m_2 B + |\gamma m_2 B - m_3 A|}{m_3 A}
    = \begin{cases}
    \frac{2 \gamma m_2 B}{m_3 A} - 1 \ge 1 & \gamma \ge \frac{m_3 A}{m_2 B} \\
    1 & o.w.
    \end{cases} \nonumber \\
    &\ge 1
    ,
\end{align}
where we used $m_3 A \ge m_2 B$ that can be shown by utilizing Lemma~\ref{lemma:n_m_ineq}. Similarly we can bound $l(\cdot)$ by
\begin{align}
    l(\gamma) &= \frac{\gamma m_2 B - \sqrt{(\gamma m_2 B - m_3 A)^2 + (\gamma - 1) n m_2 m_3 A^2}}{m_3 A} \nonumber \\
    &\le \frac{\gamma m_2 B - |\gamma m_2 B - m_3 A|}{m_3 A}
    = \begin{cases}
    1 & \gamma \ge \frac{m_3 A}{m_2 B} \\
    \frac{2 \gamma m_2 B}{m_3 A} - 1 \le 1 & o.w.
    \end{cases} \nonumber \\
    &\le 1
\end{align}

Finally, as the coefficient behind the $p^2$ term in Equation \ref{eq:gamma_rel_quad_ineq_general_sbm} is always negative, the quadratic function is concave. Therefore, the inequality holds if and only if $q \, u(\gamma) = p^*_1 \ge p \ge = p^*_2 = q \, l(\gamma)$.

\end{proofEnd}

Note that for given $p$ and $q$, solving for $u(\gamma^*)=\frac{p}{q}$ if $p \ge q$ or $l(\gamma^*)=\frac{p}{q}$ if $p < q$, provides an equivalence notion for the effect of triadic closure. In this case, the effect of closing a random wedge on network integration is the same as adding a $\gamma^*$-homophilous edge to the network.

For the special case of balanced groups, Figure~\ref{fig:p_vs_gamma(lambda1)} shows $u(\gamma)$ and $l(\gamma)$ for different values of $K$. In this figure, the shaded area corresponds to the values of $\frac{p}{q}$ such that $u(\gamma) \ge \frac{p}{q} \ge l(\gamma)$. This is the region where triadic closure has a positive relative effect compared to adding a $\gamma$-homophilous edge. We can see that as we increase $\gamma$, triadic closure will have a more-positive effect for larger values of $p$. Further, we can see that $u(\gamma)$ increases for larger values of $K$.

To see the effect of heterogeneous sizes in the groups, we consider the case where each group~$k$ has $n_k = n_1 \lambda^k$~members. So, the larger the $\lambda$, the more variance in size across groups. Figure~\ref{fig:p_vs_gamma(k2)} shows $u(\gamma)$ and $l(\gamma)$ for different $\lambda$~values when the number of groups is fixed. By increasing $\lambda$, we see that $u(\gamma)$ decreases, indicating that triadic closure has a less positive effect as the relative sizes between the groups increases.

\begin{figure}[t]
\begin{minipage}{0.49\linewidth}
    \centering
    \includegraphics[width=0.95\linewidth]{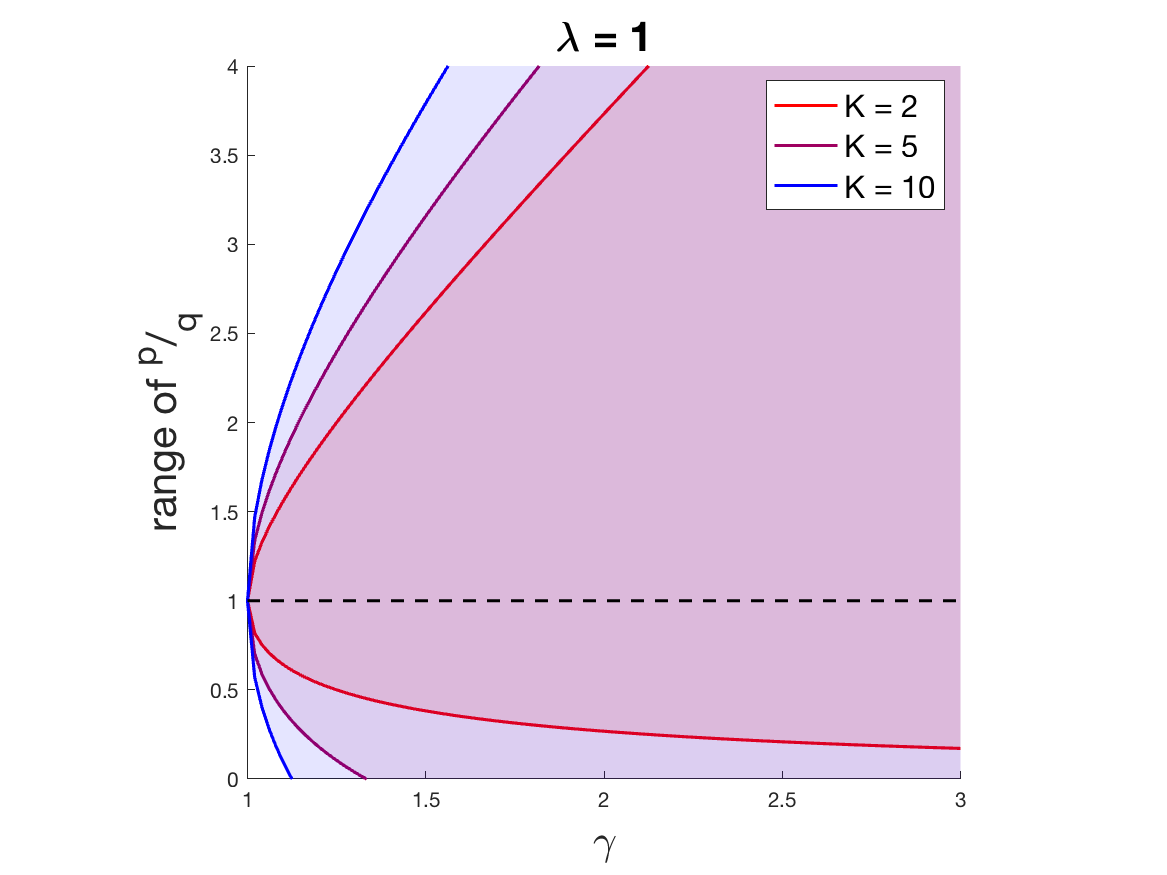}
    \caption{$u(\gamma)$ and $l(\gamma)$ for balanced groups.}
    \Description{For each $\gamma$, the region of $\frac{p}{q}$ such that triadic closure has a positive relative effect is plotted for different $K$s. The effect of triadic closure increases as $K$ increases.}
    \label{fig:p_vs_gamma(lambda1)}
\end{minipage}
\hfill
\begin{minipage}{0.49\linewidth}
    \centering
    \includegraphics[width=0.95\linewidth]{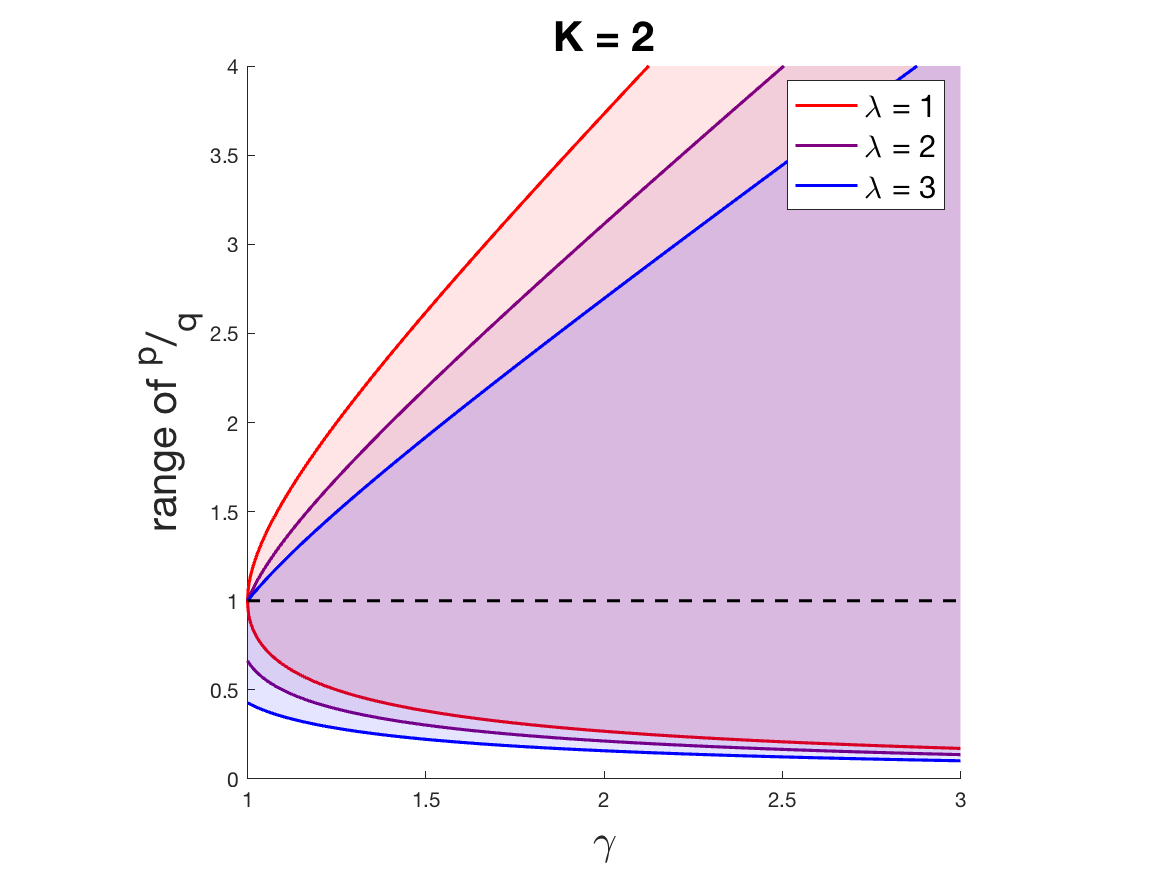}
    \caption{$u(\gamma)$ and $l(\gamma)$ for unbalanced groups ($\lambda=2$).}
    \Description{For each $\gamma$, the region of $\frac{p}{q}$ such that triadic closure has a positive relative effect is plotted for two groups but with different sizes. For more unbalanced groups, triadic closure is less effective.}
    \label{fig:p_vs_gamma(k2)}
\end{minipage}
\end{figure}

\subsection{Examining Other Measures of Network Health}

Thus far, we have studied integration using a popular measure in the literature---the fraction of bichromatic edges. High rates of network integration can be observed in settings where we may otherwise consider the network to be brittle. We therefore consider another robust measure of network health using eigenvector centrality. By doing so, we show that the positive effect of triadic closure is not limited to the original measure of integration.
 

Let $\mA$ be the network's adjacency matrix and $\vv(\mA)$ be the eigenvector corresponding to the largest eigenvalue of $\mA$. The eigenvector centrality of the $i^{th}$~node is defined as $\evv_i(\mA)$. Suppose we have a network consisting of two groups, including the setting where the groups may be imbalanced in size. Then our value of interest is the ratio of the average centrality of the minority to the average centrality of the majority group. As above, we first consider the absolute effect of triadic closure.

\begin{thm}
\label{thm:sbm:abs_eff_on_eigvec}
Consider an $\text{SBM}$ network $G$ with two types consisting of $n_1$ and $n_2$ nodes, where $n_1 > n_2$. Let $EV_k$ be the expected eigenvector centrality of a node from the $k^{th}$ group. For sufficiently large $n = n_1 + n_2$, triadic closure increases $\frac{EV_2}{EV_1}$ if and only if $p > q$.
\end{thm}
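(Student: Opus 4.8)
The plan is to pass to the mean-field limit, where for large $n=n_1+n_2$ the leading eigenvector of the SBM adjacency matrix concentrates on a vector that is constant on each block. Writing $x_1,x_2$ for its value on the majority and minority blocks, the eigenvector equation reduces to a $2\times2$ problem with the block-degree matrix
\[
M=\begin{pmatrix} p n_1 & q n_2 \\ q n_1 & p n_2\end{pmatrix},
\]
whose dominant eigenpair I can solve in closed form. Setting $S=\sqrt{p^2(n_1-n_2)^2+4q^2 n_1 n_2}$, the quantity of interest is the component ratio
\[
r=\frac{EV_2}{EV_1}=\frac{x_2}{x_1}=\frac{p(n_2-n_1)+S}{2q n_2}.
\]
One checks that $r=1$ exactly when $p=q$ (then $M$ has rank one with eigenvector $(1,1)$), and that $r<1$ whenever $p>q$ and $n_1>n_2$, recovering the minority's centrality disadvantage. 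I would first justify this reduction at the paper's level of rigor, using spectral concentration of the SBM together with the same $\E[\mathrm{num}]/\E[\mathrm{den}]$ approximation employed in the earlier proofs.

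Next I would compute how closing one random wedge changes $M$ in expectation. Adding a monochromatic edge inside block $k$ raises $M_{kk}$ by $2/n_k$, while adding a bichromatic edge raises $M_{12}$ by $1/n_1$ and $M_{21}$ by $1/n_2$. Weighting these by the probability that the closed wedge is of each type requires the expected \emph{per-group} wedge counts $w_{m,1},w_{m,2},w_b$, which I would derive as in Lemma~\ref{lemma:n_wedges_sbm} but tracking the block of the two endpoints: for endpoints both in block $k$ the expected number of common neighbors is $n_k p^2+n_{3-k} q^2$, while for cross-block endpoints it is $n\,pq$. This gives $\delta M$ as an explicit function of $p,q,n_1,n_2$. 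In particular, at $p=q$ the three wedge types occur in exactly the proportions $\binom{n_1}{2}:\binom{n_2}{2}:n_1n_2$, so triadic closure is color-blind and adds edges uniformly at random, leaving the ER structure and hence $r$ unchanged; this anchors the boundary case of the claimed equivalence.

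Finally I would apply first-order perturbation theory to $r=r(M_{11},M_{12},M_{21},M_{22})$. Since one added edge perturbs the $\Theta(n)$-scale entries of $M$ by $O(1/n)$, the linearization $\delta r=\sum_{i,j}\frac{\partial r}{\partial M_{ij}}\,\delta M_{ij}$ is valid, and the four partials are computable from the closed form above (for instance $\partial r/\partial M_{21}=1/S$, $\partial r/\partial M_{22}>0$, $\partial r/\partial M_{11}<0$, with $\partial r/\partial M_{12}$ sign-indefinite), turning $\delta r$ into an explicit rational expression in $p,q,n_1,n_2$. The goal is then to show $\operatorname{sign}(\delta r)=\operatorname{sign}(p-q)$. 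I expect the main obstacle to be exactly this algebraic step: collecting the four contributions over a common denominator and factoring out $(p-q)$ while proving the remaining cofactor is positive for $n_1>n_2$. I would attack it by substituting $M_{11}-M_{22}=p(n_1-n_2)$ and the wedge-count proportions, using the two-group specializations of Lemma~\ref{lemma:n_m_ineq} to control signs, and by using the already-verified case $p=q$ (where $\delta r=0$) as a consistency check on the factorization.
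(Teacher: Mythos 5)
Your proposal is correct and follows essentially the same route as the paper: both pass to the expected block matrix, exploit the block-constant structure of the leading eigenvector to obtain a closed-form quadratic for the ratio $a=\frac{EV_2}{EV_1}$, and perturb to first order using the expected per-type wedge counts (the paper perturbs the block entries and re-solves the quadratic, whereas you differentiate the solved formula --- a cosmetic difference). The algebraic step you flag as the main obstacle in fact collapses: in the relevant combination $(n_1-1)p_1'-(n_2-1)p_2'$ the $q^2$ contributions from the two monochromatic wedge counts cancel, and the first-order correction comes out as $\frac{n}{w}\frac{(n_1-n_2)}{n_2}(p-q)\,p^2\,\frac{\beta-(n_1-n_2)p}{2q\beta}$ with $\beta=\sqrt{(n_1-n_2)^2p^2+4n_1n_2q^2}>(n_1-n_2)p$, so the sign is exactly $\operatorname{sign}\bigl((n_1-n_2)(p-q)\bigr)$ as you anticipated.
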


\begin{proof}
Let $P_{ij}$ be the probability that node~$i$ is connected to another node~$j$ in $G$. In an $SBM$, $P_{ij}=p$ if $type(i)=type(j)$ and $P_{ij}=q$ otherwise. We also set $P'_{ii}=0$ to avoid self loops. After closing a random wedge, we call the new network~$G'$ and the new probability that $i$ and $j$ are connected $P'_{ij}$. 

Let $w_{ij}$ be the expected number of wedges in $G$, such that we have edges $(i, h)$ and $(h, j)$ exist but not $(i, j)$. Let $w$ be the expected total number of wedges. With mean field approximation:
\begin{equation}
    P'_{ij} = P_{ij} + (1 - P_{ij}) \frac{w_{ij}}{w}
    .
\end{equation}
Here, the second term on the right hand side approximates the probability that $i$ and $j$ get connected after closing a random wedge. Note that $w=O(n^3)$ and $w_{ij}=O(n)$, so this term is $O \left( { \frac{1}{n^2} } \right)$. We find $w_{ij}$ based on $i$ and $j$'s types:
\begin{equation}
    w_{ij} = \begin{cases}
        (n_1 - 2)p^2 + n_2 q^2 & type(i)=type(j)=1 \\
        (n_2 - 2)p^2 + n_1 q^2 & type(i)=type(j)=2 \\
        (n - 2)p q & type(i) \neq type(j)
    \end{cases}
    .
\end{equation}

By plugging $w_{ij}$ into $P'_{ij}$, we note:
\begin{equation}
    P'_{ij} = \begin{cases}
        p'_1 = p + (1 - p)[(n_1 - 2)p^2 + n_2 q^2]\frac{1}{w} & type(i)=type(j)=1 \\
        p'_2 = p + (1 - p)[(n_2 - 2)p^2 + n_1 q^2]\frac{1}{w} & type(i)=type(j)=2 \\
        q' = q + (1 - q)(n - 2)p q \frac{1}{w} & type(i) \neq type(j)
    \end{cases}
    .
\end{equation}

Although we look for the expected eigenvector of the network, for a sufficiently large number of nodes, this quantity will be close to the eigenvector of the expected network~\cite{chung2011spectra, krishna}. We show the expected adjacency matrix of $G'$ by $\mA'=[P'_{ij}]$ and study eigenvectors of $\mA'$ instead of $G'$.

Due to the block nature of $\mA'$, it's easy to see the eigenvector corresponding to the largest eigenvalue of $\mA'$, which we denote by $\vv'$, has only two distinct values. Without loss of generality, we assume $\evv'_i = 1$ if $type(i)=1$ and $\evv'_i = a = \frac{EV_2}{EV_1}$ otherwise. That is, eigenvectors have a scale ambiguity that is usually resolved by setting the norm to one. Here, we instead fix element of the vector. Since $\mA' \vv' = \lambda \vv'$, we need to satisfy the following two equations: 
\begin{align}
    (n_1 - 1) p'_1 + n_2 q' a = \lambda \\
    n_1 q' + (n_2 - 1) p'_2 a = \lambda a
\end{align}
These give us a quadratic equation for $a$:
\begin{equation}
    a^2 [n_2 q'] + a [(n_1 - 1)p'_1 - (n_2 - 1)p'_2] - n_1 q' = 0
    .
\end{equation}
Dropping $O \left( { \frac{1}{n^3} } \right)$ from $p'_1$, $p'_2$, and $q'$ and plugging into the above equation, we get:
\begin{equation}
    a^2 \left[ { n_2 q(1 + (1-q) p \frac{n}{w}) } \right] 
    + a \left[ { (n_1 - n_2) p \left( { 1 + (1-p)p\frac{n}{w} } \right) } \right]
    - n_1 q \left( { 1 + (1-q) p \frac{n}{w} } \right) = 0.
\end{equation}
Defining $\beta = \sqrt{(n_1 - n_2)^2 p^2 + 4 n_1 n_2 q^2}$, the square root of the discriminant ($\Delta$) of this quadratic equation is:
\begin{equation}
    \sqrt{\Delta} = \beta \Big[
    1 + \frac{n}{w} \frac{(n_1 - n_2)^2 p^3 (1 - p) + 4 n_1 n_2 p q^2 (1 - q)}{\beta^2}
    \Big]
    .
\end{equation}
We can then find the solution corresponding to $a \ge 0$:
\begin{equation}
\label{eq:a_sbm_eigvec}
    a = \frac{\beta - (n_1 - n_2)p}{2n_2 q} 
    + \frac{n}{w}\frac{(n_1 - n_2)}{n_2}(p - q) p^2 \frac{\beta - (n_1 - n_2)p}{2 q \beta} + O \left( { \frac{1}{n^3} } \right)
    .
\end{equation}
This solution consists of two terms: The first term is exactly $\frac{EV_2}{EV_1}$ before closing a wedge. The second term is the change due to triadic closure. Since $\beta > (n_1 - n_2)p$, signs of $n_1 - n_2$ and $p - q$ determine the effect. Given group $1$ is the majority group, the effect of triadic closure on $a$ is positive if and only if $p > q$. 

\end{proof}

This above theorem shows that triadic closure can improve the centrality position of a minority group in an absolute sense. As above, we also examine this in a relative sense by comparing triadic closure with adding a $\gamma$-homophilous random edge.

\begin{theoremEnd}{thm}
\label{thm:sbm:gamma_rel_eff_on_eigvec}
Consider the baseline of adding a $\gamma$-homophilous random edge to an $\text{SBM}$ network $G$ with two types consisting of $n_1$ and $n_2$ nodes, where $n_1 > n_2$. Let $EV_k$ be the expected eigenvector centrality of a node from the $k^{th}$ group. For sufficiently large $n = n_1 + n_2$, triadic closure has positive relative effect on $\frac{EV_2}{EV_1}$ if and only if $\gamma > \frac{p}{q} c(p, q)$, where $c(p, q) \le 1$. Further, $c(p, q) > \frac{q}{p}$ if $p>q$.
\end{theoremEnd}

\begin{proofEnd}
We use a similar terminology as the proof of Theorem~\ref{thm:sbm:abs_eff_on_eigvec}. 

First, we find $P'_{ij}$ after closing a $\gamma$-homophilous edge:
\begin{equation}
    P'_{ij} = \begin{cases}
        p' = p + (1 - p) \frac{\gamma}{o_\gamma} & type(i) = type(j) \\
        q' = q + (1 - q) \frac{1}{o_\gamma} & type(i) \neq type(j)
    \end{cases}
    ,
\end{equation}
where $o_\gamma=o_b+\gamma o_m$, $o_m$ and $o_b$ are number of monochromatic and bichromatic missing edges respectively. Defining $a=\frac{EV_2}{EV_1}$, we obtain a quadratic equation
\begin{equation}
    a^2[n_2 q'] + a[n_1 - n_2]p' - n_1 q' = 0
    ,
\end{equation}
which has a single valid solution when we impose $a \ge 0$:
\begin{equation}
\label{eq:a_sbm_eigvec_rel}
    a = \frac{\beta - (n_1 - n_2)p}{2n_2 q} 
    + \frac{1}{o_\gamma}\frac{(n_1 - n_2)}{n_2}(1-p) \big[ \frac{p(1-q)}{q(1-p)} - \gamma \big] \frac{\beta - (n_1 - n_2)p}{2 q \beta} + O(\frac{1}{n^3})
    .
\end{equation}

A comparison of Equation~\ref{eq:a_sbm_eigvec_rel} and Equation~\ref{eq:a_sbm_eigvec} shows triadic closure has a positive relative effect if and only if
\begin{equation}
    \frac{n}{w} p^2 (p-q) > \frac{1}{o_\gamma} (1- p) \big[ \frac{p(1-q)}{q(1-p)} - \gamma \big].
\end{equation}
Simplifying this constraint, we have:
\begin{equation}
    \gamma > (\frac{p}{q}) \frac{w (1-q) - o_b n p q (p-q)}{w (1-p) + o_m n p^2 (p-q)} = \frac{p}{q} c(p, q)
    .
\end{equation}
Without going through details, we utilize Lemma~\ref{lemma:n_edges_sbm}, and expand $c(p, q)$:
\begin{equation}
    c(p, q) = \frac{(n_1^3 + n_2^3) p^2 + n n_1 n_2 q(2p + q)}
    {(n_1^3 + n_2^3) p^2 + n n_1 n_2 [p(2q + p) + \frac{1-p}{1-q}(q^2 - p^2)]}
    .
\end{equation}

Next we bound $c(p,q)$. We first show $c(p,q) \le 1$ by comparing the nominator and the denominator of $c(p, q)$:
\begin{align}
    q (2p + q) &\le p(2q + p) + \frac{1-p}{1-q}(q^2 - p^2) \nonumber \\
    \iff (p^2 - q^2)[\frac{1-p}{1-q} - 1] &\le 0 \nonumber \\
    \iff -(p - q)^2 &\le 0
    .
\end{align}

Finally, we show $\frac{p}{q} c(p, q) \ge 1$ if and only if $p > q$:
\begin{align}
    (n_1^3 + n_2^3) p^3 + n n_1 n_2 p q (2p + q) &\ge (n_1^3 + n_2^3) p^2 q + n n_1 n_2 [pq(2q + p) + \frac{1-p}{1-q}q(q^2 - p^2)]  \nonumber \\
    \iff 0 &\ge -(p - q) \Big[ (n_1^3 + n_2^3) p^2 + n n_1 n_2 p q + n n_1 n_2 q \frac{1-p}{1-q} \Big] \nonumber \\
    \iff p &> q
    .
\end{align}

\end{proofEnd}

The proof of Theorem~\ref{thm:sbm:gamma_rel_eff_on_eigvec} follows a similar process as that of Theorem~\ref{thm:sbm:abs_eff_on_eigvec}. The general idea is to approximate expected eigenvectors with eigenvectors of the expected network and then compare the change in the largest eigenvector due to adding an edge versus due to closing a wedge.

We saw in Theorem~\ref{thm:sbm:gamma_rel_eff_on_eigvec} that adding a random edge, which corresponds to $\gamma=1$, is a hard-to-beat baseline. In a homophilous network, $\frac{p}{q} c(p,q) > 1 = \gamma$, so the relative effect is always negative. However, we can also see from this theorem that compared to a more realistic alternative  ($\gamma > 1$), as long as the network is not very homophilous, i.e., $\frac{p}{q} < \gamma$, triadic closure exhibits a more favorable relative performance.

\section{Triadic Closure in the Jackson-Rogers Model}
\label{sec:jr}

The Jackson-Rogers model is an evolving model originally introduced for homogeneous networks~\cite{jr} and later extended to directed heterogeneous networks~\cite{bramoulle2012homophily}. Here, we use an extended version of the model, which gives us more control over the incorporation of triadic closure. 

The evolution of the network is defined over discrete time steps. At each step, a new node arrives and makes new connections in two phases. In the first phase, it randomly selects $N_S$ and $N_D$ initial friends from similar and dissimilar nodes, respectively. Note that edges are directed from the new node to the older ones. In the second phase, it chooses $N_F$ nodes from the set of nodes accessible through an outbound edge of an initial friend. Nodes already connected to the new node are excluded from this set. This process is also biased: $\alpha$ proportion of these $N_F$ nodes will be selected from the friends of the similar initial friends. The rest of the connections will be equally distributed towards the friends of the dissimilar initial friends. 

In the explained Jackson-Rogers model, $N_F$ exactly accounts for triadic closure, and we can directly control it to measure the effect while the network is evolving. This corresponds to the absolute effect. However, manipulating $N_F$ also changes the total number of new connections per node. To distinguish the effect of triadic closure from an increased number of edges, we adopt the notion of relative effect. We say triadic closure has a positive relative effect if increasing $N_F$, while $N = N_S + N_D + N_F$ and $\frac{N_S}{N_D}$ are kept fixed, results in a higher network integration.

We identify homophily in the first phase of the process by $N_S > N_D$. The definition of homophily in the second phase is not straightforward as it depends on the number of friends-of-friends of different types. Our analyses in the following sections are not sensitive to the selection of $\alpha$ as long as $0 < \alpha < 1$.

\subsection{Absolute and Relative Effects of Triadic Closure}

To study the expected behavior of an evolving network from the Jackson-Rogers model, we first prove the following theorem.

\begin{theoremEnd}{thm}
\label{thm:identity_1}
For an evolving Jackson-Rogers network~$G$ with $K$~types and parameters $N_S$, $N_D$, $N_F$, and $1 > \alpha > \frac{1}{K}$, the network integration converges to
\begin{equation}
    \frac{N_D + (1 - \alpha)N_F}{N_S + N_D + \frac{K}{K-1}(1 - \alpha) N_F}
\end{equation}
with the rate of $O \left( { t^{-\frac{N_S + N_D}{N}} } \right)$, regardless of the distribution of node types.
\end{theoremEnd}

\begin{proofEnd}
Let $\theta_t \in [K]$ shows the type of the node arrived at time~$t$ and $P(\theta_t = \theta) = p(\theta)$ independent of other nodes. We name the nodes based on their arrival time. For $t + 1 > t_0$, we define
\begin{equation}
    P_{t_0}^{t+1}(\theta_{t+1} | \theta_{t_0}) = P(\text{node } t+1 \text{ is of type } \theta_{t+1} \text{ and is connected to node } t_0)
    .
\end{equation}
Sometimes we represent $P_{t_0}^{t}(\cdot | \cdot)$ as matrix~$\mP_{t_0}^{t} \in \R^{K \times K}_+$. Let $n_t(\theta)$ be the total number of nodes of type $\theta$ until $t$, and $N = N_S + N_D + N_F$ be the total number of outbound edges from each node. For $\theta_{t + 1} = \theta_{t_0} = \theta$, the mean field approximation of $P_{t_0}^{t+1}(\theta | \theta)$ is
\begin{align}
\label{eq:p_theta_theta}
    P_{t_0}^{t+1}(\theta | \theta) &= p(\theta) \frac{N_S}{n_t(\theta)} \nonumber \\
    &+ p(\theta) \alpha N_F \frac{\sum_{\tau = t_0 + 1}^t P_{t_0}^\tau(\theta | \theta)}{n_t(\theta) N} \nonumber \\
    &+ p(\theta) (1 - \alpha) \frac{N_F}{K - 1} \sum_{\theta' \neq \theta} \frac{\sum_{\tau = t_0 + 1}^t P_{t_0}^\tau(\theta' | \theta)}{n_t(\theta') N}
    .
\end{align}
The above equation consists of three terms; in all of them $p(\theta)$ corresponds to the probability that node $t + 1$ is of type $\theta$. The first term of Equation~\ref{eq:p_theta_theta} shows the probability that node~$t + 1$ finds node~$t_0$ in the first phase: there are $n_t(\theta)$ nodes of type~$\theta$ and node $t + 1$ is going to select $N_S$ of them, so, the probability that node~$t_0$ is one of the $N_S$~selected~nodes is $\frac{N_S}{n_t(\theta)}$. The second term of Equation~\ref{eq:p_theta_theta} shows the probability that node~$t + 1$ finds node~$t_0$ through their same type friends: the expected number of edges going out from nodes of type~$\theta$ and entering node~$t_0$ is $\sum_{\tau = t_0 + 1}^t P_{t_0}^\tau(\theta | \theta)$. The total number edges exiting nodes of type~$\theta$ is $n_t(\theta) N$. So, the probability that node~$t + 1$ finds node~$t_0$ from the same type friends in the second phase is $\alpha N_F \frac{\sum_{\tau = t_0 + 1}^t P_{t_0}^\tau(\theta | \theta)}{n_t(\theta) N}$. Here we neglected that some edges exiting nodes of type~$\theta$ might end in nodes which are already connected to node~$t + 1$ in phase~$1$ because in long run these edges are negligible compared to $n_t(\theta) N$. The third term of Equation~\eqref{eq:p_theta_theta} corresponds to the probability that node~$t + 1$ finds node~$t_0$ through nodes of other types. Again $\sum_{\tau = t_0 + 1}^t P_{t_0}^\tau(\theta' | \theta)$ is the expected number of links from nodes of type $\theta' \neq \theta$ towards node~$t_0$ and $n_t(\theta') N$ is the total number of links exiting nodes of type $\theta'$. So, the probability that node~$t + 1$ finds node~$t_0$ in the second phase through nodes of type~$\theta'$ is $(1 - \alpha) \frac{N_F}{K - 1} \frac{\sum_{\tau = t_0 + 1}^t P_{t_0}^\tau(\theta' | \theta)}{n_t(\theta') N}$.

Similarly, the mean field approximation of $P_{t_0}^{t + 1}(\theta' | \theta)$, where $\theta' \neq \theta$, is
\begin{align}
\label{eq:p_thetap_theta}
    P_{t_0}^{t+1}(\theta' | \theta) &= p(\theta') \frac{N_D}{n_t(\theta)} \nonumber \\
    &+ p(\theta') \alpha N_F \frac{\sum_{\tau = t_0 + 1}^t P_{t_0}^\tau(\theta' | \theta)}{n_t(\theta') N} \nonumber \\
    &+ p(\theta') (1 - \alpha) \frac{N_F}{K - 1} \sum_{\theta'' \neq \theta'} \frac{\sum_{\tau = t_0 + 1}^t P_{t_0}^\tau(\theta'' | \theta)}{n_t(\theta'') N}
    .
\end{align}
The above equation also consists of three terms; in all of them $p(\theta')$ corresponds to the probability that node~$t + 1$ is of type~$\theta'$. The first term shows the probability that node~$t + 1$ finds node~$t_0$ in the first phase. The second and third terms are the probability that node~$t + 1$ finds node~$t_0$ in the second phase through nodes of type $\theta'$ and $\theta'' \neq \theta'$, respectively.

We approximate $n_t(\theta)$ in Equations~\ref{eq:p_theta_theta}~and~\ref{eq:p_thetap_theta} with its expected value: $n_t(\theta) = p(\theta) t$. To simplify the equations and to be consistent with original definitions and proofs of the Jackson-Rogers model~\cite{bramoulle2012homophily}, we define new variables: 
\begin{align}
\label{eq:def}
    m_r &\coloneqq \frac{N_S + N_D}{N} \nonumber \\
    m_s &\coloneqq \frac{N_F}{N} \nonumber \\
    [\mP]_{\theta', \theta} &\coloneqq p(\theta) 1_{\theta' = \theta} \nonumber \\
    [\mP_r]_{\theta', \theta} = P_r(\theta', \theta) &\coloneqq \begin{cases}
      \frac{N_S}{N_S + N_D} & \theta' = \theta \\
      \frac{1}{K-1} \frac{N_D}{N_S + N_D} & o.w.
    \end{cases} \nonumber \\
    [\mB_r]_{\theta', \theta} &\coloneqq [\mP \mP_r \mP^{-1}]_{\theta', \theta} = \frac{p(\theta') P_r(\theta', \theta)}{p(\theta)} \nonumber \\
    [\mP_s]_{\theta', \theta} = P_s(\theta', \theta) &\coloneqq \begin{cases}
      \alpha & \theta' = \theta \\
      \frac{1}{K-1} (1 - \alpha) & o.w.
    \end{cases} \nonumber \\
    [\mB_s]_{\theta', \theta} &\coloneqq [\mP \mP_s \mP^{-1}]_{\theta', \theta} = \frac{p(\theta') P_s(\theta', \theta)}{p(\theta)} \nonumber \\
    [\mPi_{t_0}^t]_{\theta', \theta} = \Pi_{t_0}^t(\theta' | \theta) &\coloneqq \sum_{\tau = t_0 + 1}^t P_{t_0}^\tau(\theta' | \theta)
    .
\end{align}

Now we can merge Equations~\ref{eq:p_theta_theta}~and~\ref{eq:p_thetap_theta} into a single equation with matrix operations:
\begin{equation}
\label{eq:diff_eq_Pi}
    \mP_{t_0}^{t+1} = \frac{N m_r}{t} \mB_r + \frac{m_s}{t} \mB_s \mPi_{t_0}^t 
    .
\end{equation}
The LHS of Equation~\ref{eq:diff_eq_Pi} can be approximated by
$\mP_{t_0}^{t+1} = \mPi_{t_0}^{t+1} - \mPi_{t_0}^t \approx \frac{\partial \mPi_{t_0}^t}{\partial t}$. Then Equation \ref{eq:diff_eq_Pi} will be a differential equation for $\mPi_{t_0}^t$. Given the initial condition $\mPi_{t_0}^{t_0} = 0$, the solution of this equation is
\begin{equation}
\label{eq:Pi_closed_form}
    \mPi_{t_0}^t = \frac{N m_r}{m_s} \big[(\frac{t}{t_0})^{m_s \mB_s} - \mI \big] \mB_s^{-1} \mB_r
    ,
\end{equation}
where by definition 
$(\frac{t}{t_0})^{m_s \mB_s} = \sum_{\mu=0}^\infty \frac{1}{\mu!} (m_s \ln(\frac{t}{t_0}))^\mu \mB_s^\mu$. $\mB_s$ is invertible iff $\mP_s$ is invertible. From Lemma~\ref{lemma:inv_p}, $\mP_s$ is also invertible iff $\alpha > \frac{1}{K}$. 

The variable~$\Pi_{t_0}^t(\theta' | \theta)$ shows the expected number of edges to node~$t_0$ from nodes of type~$\theta'$ arrived until~$t$, given type of node~$t_0$ is $\theta$. So, the expected number of monochromatic edges connected to node~$t_0$ at time~$t$ is:
\begin{equation}
\label{eq:mono_as_trace}
    \text{mono}_{t_0}^t = \sum_{\theta \in [K]} p(\theta) \Pi_{t_0}^t(\theta | \theta) = \Tr(\mP \mPi_{t_0}^t)
    .
\end{equation}
Plugging Equation~\ref{eq:Pi_closed_form} into Equation~\ref{eq:mono_as_trace}:
\begin{equation}
\label{eq:mono_expanded}
    \text{mono}_{t_0}^t = \frac{N m_r}{m_s} \sum_{\mu=1}^\infty \frac{1}{\mu !} (m_s \ln(\frac{t}{t_0}))^\mu \Tr(\mP \mB_s^{\mu - 1} \mB_r)
    .
\end{equation}
Here $\Tr(\mP \mB_s^{\mu - 1} \mB_r)$ can be simplified to $\Tr(\mP \mP_s^{\mu - 1} \mP_r)$ using definitions from Equation~\ref{eq:def}. According to Lemma~\ref{lemma:inv_p}, there are eigendecompositions for $\mP_r$ and $\mP_s$ as $\mV \mD_r \mV^{-1}$ and $\mV \mD_s \mV^{-1}$ (note that they have similar eigenvectors). Let's name $d_r = [\mD_r]_{i, i} = \frac{K N_S/(N_S + N_D) - 1}{K - 1}$ and $d_s = [\mD_s]_{i, i} = \frac{K\alpha - 1}{K - 1}$ for $i > 1$. With these representations and utilizing Lemma~\ref{lemma:inv_V} to calculate $\mV^{-1}$, we can further simplify
\begin{align}
\label{eq:trace_calc}
    \Tr(\mP \mP_s^{\mu - 1} \mP_r) &= \Tr(\mP \mV \mD_s^{\mu-1} \mD_r \mV^{-1}) \nonumber \\
    &= \Tr\Big(\mP \mV \big(d_r d_s^{\mu - 1} \mI  + (1 - d_r d_s^{\mu - 1}) \mE_{11} \big) \mV^{-1} \big) \nonumber \\
    &= d_r d_s^{\mu - 1} \Tr(\mP) + (1 - d_r d_s^{\mu - 1}) \Tr(\mP \mV \mE_{11} \mV^{-1}) \nonumber \\
    &= d_r d_s^{\mu - 1} + (1 - d_r d_s^{\mu - 1}) \Tr(\mP \frac{1}{K} \vone) \nonumber \\
    &= \frac{1}{K} + \frac{K - 1}{K} d_r d_s^{\mu - 1} 
    ,
\end{align}
where $\mE_{11}$ is a matrix with only one non-zero element $[\mE_{11}]_{1, 1} = 1$. We also used the fact that $\Tr(\mP) = 1$. Plugging this result into Equation~\ref{eq:mono_expanded}, we can find a simpler form for $\text{mono}_{t_0}^t$:
\begin{align}
\label{eq:mono_closed_form}
    \text{mono}_{t_0}^t &= \frac{N m_r}{m_s} \sum_{\mu=1}^\infty \frac{1}{\mu !} (m_s \ln(\frac{t}{t_0}))^\mu (\frac{1}{K} + \frac{K - 1}{K} d_r d_s^{\mu - 1}) \nonumber \\
    &= \frac{N m_r}{K m_s} \Big[ \sum_{\mu=0}^\infty (m_s \ln(\frac{t}{t_0}))^\mu - 1 + \sum_{\mu=0}^\infty (K - 1)\frac{d_r}{d_s}(m_s d_s \ln(\frac{t}{t_0}))^\mu - 1 \Big] \nonumber \\
    &= \frac{N m_r}{K m_s} \Big[ (\frac{t}{t_0})^{m_s} - 1 + (K - 1) \frac{d_r}{d_s} \big( (\frac{t}{t_0})^{m_s d_s} - 1 \big) \Big]
    .
\end{align}

Our Variable of interest is the total number of monochromatic edges until time $t$: $\text{mono}^t = \sum_{t_0 = 1}^t \text{mono}_{t_0}^t$. To calculate this sum, we first observe that for any $0 \le c < 1$, 
\begin{equation}
    \int_1^{t + 1} t_0^{-c} dt_0 < \sum_{t_0=1}^t t_0^{-c} < 1 + \int_1^t t_0^{-c} dt_0
    ,
\end{equation}
which gives $\sum_{t_0=1}^t t_0^{-c} = \frac{1}{1 - c} t^{1 - c} + O(1)$. Since $m_s < 1$ and $m_s d_s < 1$, we can use this approximation and find $\text{mono}^t$ from Equation~\ref{eq:mono_closed_form}:
\begin{align}
\label{eq:all_mono}
    \text{mono}^t &= \frac{N m_r}{K m_s} \Big[ \sum_{t_0=1}^t(\frac{t}{t_0})^{m_s} - t + (K - 1) \frac{d_r}{d_s} \big( \sum_{t_0=1}^t (\frac{t}{t_0})^{m_s d_s} - t \big) \Big] \nonumber \\
    &= \frac{N m_r}{K m_s} \Big[ \frac{t}{1 - m_s} - t + O(t^{m_s}) + (K - 1) \frac{d_r}{d_s} \big( \frac{t}{1 - m_s d_s} - t \big) + O(t^{m_s d_s}) \Big] \nonumber \\
    &= t \frac{N m_r}{K} \Big[ \frac{1}{1 - m_s} + \frac{(K - 1) d_r}{1 - m_s d_s} \Big] 
    + O(t^{m_s}) + O(t^{m_s d_s})
    .
\end{align}

Finally, we can find the integration in long run ($t \rightarrow \infty$):
\begin{align}
    f_\infty = \lim_{t \rightarrow \infty} f(t) &= 1 - \lim_{t \rightarrow \infty} \frac{\text{mono}^t}{N t} \nonumber \\
    &= 1 - \frac{m_r}{K} \Big[ \frac{1}{1 - m_s} + \frac{(K - 1) d_r}{1 - m_s d_s} \Big] \nonumber \\
    &= 1 - \frac{m_r}{K} \Big[ \frac{1 - m_s d_s + (K - 1)d_r(1 - m_s)}{(1 - m_s)(1 - m_s d_s)} \Big] \nonumber \\
    &= (\frac{K - 1}{K}) \frac{(1 - d_r)(N_S + N_D) + (1 - d_s)N_F}{N_S + N_D + (1 - d_s)N_F} \nonumber \\
    &= \frac{N_D + (1 - \alpha)N_F}{N_S + N_D + \frac{K}{K-1}(1 - \alpha) N_F}
    .
\end{align}
with the convergence rate~$O(t^{m_s - 1})=O(t^{-\frac{N_S + N_D}{N}})$ as seen from Equation~\ref{eq:all_mono}.

\end{proofEnd}

In the proof of Theorem~\ref{thm:identity_1}, we obtain a stronger result than the integration in equilibrium. Following \citet{bramoulle2012homophily}, we use a mean-field approximation to find a coupled differential equation of how the composition of neighbors of a node changes over time. We find a closed-form solution to this differential equation and aggregate the behavior of individual nodes to find network integration as a function of time. Understanding the dynamic of the network in time lets us study the effect of interventions in Section~\ref{sec:jr:interv}.

Theorem~\ref{thm:identity_1} enables us to study the effect of triadic closure on network integration in equilibrium. From this theorem, it is straightforward to see that in a network with homophily, increasing $N_F$, while $N_S$ and $N_D$ are unchanged, will increase network integration. We call this the absolute effect and formally state the observation in the following theorem. 

\begin{theoremEnd}{thm}
\label{thm:jr:abs_eff_on_integ}
For an evolving Jackson-Rogers network~$G$ with $K$~types and parameters~$N_S$, $N_D$, $N_F$, and $1 > \alpha > \frac{1}{K}$, triadic closure has a positive absolute effect on network integration if and only if $N_S > \frac{N_D}{K - 1}$. 
\end{theoremEnd}
\begin{proofEnd}
For fixed $N_S$, $N_D$, and $\alpha < 1$:
\begin{align}
    \frac{\partial f_\infty}{\partial N_F} &= \frac{(N_S + N_D + \frac{K}{K - 1}(1-\alpha)N_F)(1 - \alpha) - (N_D + (1 - \alpha)N_F)\frac{K}{K - 1}(1 - \alpha)}{(N_S + N_D + \frac{K}{K - 1}(1-\alpha)N_F)^2} \nonumber \\
    &= \frac{1 - \alpha}{(N_S + N_D + \frac{K}{K - 1}(1-\alpha)N_F)^2} (N_S - N_D \frac{1}{K - 1}) > 0
    .
\end{align}
So, increasing $N_F$ will increase the integration if and only if $N_S > \frac{N_D}{K-1}$.
\end{proofEnd}

As above, one might attribute the positive effect in Theorem~\ref{thm:jr:abs_eff_on_integ} to the increased number of connections per node. Next, we show that even when the total number of edges per node and the composition of neighbors in the first phase are kept fixed, i.e., $N=N_S+N_D+N_F$ and $\frac{N_S}{N_D}$ are maintained, increasing $N_F$ will improve network integration. 

\begin{theoremEnd}{thm}
\label{thm:jr:rel_eff_on_integ}
For an evolving Jackson-Rogers network~$G$ with~$K$ types and parameters $N_S$, $N_D$, $N_F$, and $1 > \alpha > \frac{1}{K}$, increasing $N_F$ subject to a fixed $N=N_S + N_D + N_F$ and $\frac{N_S}{N_D}$, results in a relative improvement in network integration if and only if $N_S > \frac{N_D}{K - 1}$.
\end{theoremEnd}
\begin{proofEnd}
To measure the relative effect, we assume the total number of edges a new node makes to be fixed and equals to $N = N_S + N_D + N_F$. Further, we assume $\frac{N_S}{N_D} = \gamma$. Homophily requires $\gamma > \frac{1}{K - 1}$.

Defining $\beta = \frac{N_F}{N}$, we first simplify integration from Theorem~\ref{thm:jr:abs_eff_on_integ}:
\begin{align}
    f_\infty &= 
    \frac{N_D + (1 - \alpha)N_F}{N - N_F + \frac{K}{K - 1}(1 - \alpha)N_F} \nonumber \\ 
    &= \frac{\frac{1 - \beta}{1 + \gamma} + (1 - \alpha) \beta} {(1 - \beta) + \frac{K}{K - 1}(1 - \alpha)\beta} \nonumber \\
    &= \frac{1}{\gamma + 1} 
    \frac{1 + \beta[(1 - \alpha)(1 + \gamma) - 1]} {1 + \beta (\frac{1}{K - 1} - \frac{K}{K - 1} \alpha)}
  .
\end{align}
Then we have
\begin{align}
    \frac{\partial f_\infty}{\partial \beta} &= 
    \frac{1}{\gamma + 1} 
    \frac{(1 + \beta \frac{1 - K \alpha}{K - 1})((1 - \alpha)(1 + \gamma) - 1) - (1 + \beta[(1 - \alpha)(1 + \gamma) - 1])(\frac{1 - K \alpha}{K - 1})}
    {(1 + \beta \frac{1 - K \alpha}{K - 1})^2} \nonumber \\
    &= \frac{1}{\gamma + 1} 
    \frac{(1 - \alpha)(1 + \gamma) - 1 - \frac{1 - K \alpha}{K - 1}}
    {(1 + \beta \frac{1 - K \alpha}{K - 1})^2} \nonumber \\
    &= (\frac{1 - \alpha}{\gamma + 1})
    \frac{\gamma - \frac{1}{K - 1}}
    {(1 + \beta \frac{1 - K \alpha}{K - 1})^2} > 0
    ,
\end{align}
which shows $\frac{\partial f_\infty}{\partial N_F} =\frac{1}{N} \frac{\partial f_\infty}{\partial \beta}$ is positive if and only if there is homophily.
\end{proofEnd}

In summary, Theorems~\ref{thm:jr:abs_eff_on_integ}~and~\ref{thm:jr:rel_eff_on_integ} show in a homophilous Jackson-Rogers evolving network, amplifying the role of triadic closure helps mitigate segregation. This effect is not due to making more connections, but rather due to the effect of triadic closure exposing nodes to dissimilar nodes.  

\subsection{Behavior Under a Series of Interventions}
\label{sec:jr:interv}

We study how interventions on a network evolving with the Jackson-Rogers model impact network integration in the short and long term. Here, we focus on interventions that act solely on the first phase. Recalling our motivating examples related to college dormitory assignments or recommendation of friendships when an individual joins an online platform, we note that an authority (i.e., university or platform, respectively) may have more leverage in this initial phase than subsequent steps which proceed through friend-of-friend searches. Such interventions that act as ``nudges'' in the initial phase have recently been popular in the fairness in recommender systems community; research in this space has explored the impact of bias in link formation or other selection on the long-term health of online platforms, with some work exploring the role of small nudges by the platform to mitigate inequalities or achieve other desirable social outcomes~\cite{ekstrand2016behaviorism,guy2015social,hutson2018debiasing,knijnenburg2016recommender,schnabel2018short,stoica2018algorithmic,su2016effect}.

In our analysis of interventions, we assume that the
number of links formed in the first phase is fixed. The
designer has the ability to change the proportion of mono versus bichromatic edges formed in the initial seeding phase subject to this sum constraint. This intervention imitates, for instance, dorm assignments where there is a fixed number of slots per dorm, but universities have the ability to change the composition of occupants in each dorm. We also consider the setting where the designer would like to optimize network integration subject to rate-of-change constraints on the network or on the time frame over which the intervention can occur. This is a model for scenarios where it may be costly, infeasible, or undesirable to introduce a dramatic change all at once.

In the following theorem, we first find out the extent interventions can change the network integration assuming the period of intervening is very shorter than the age of the network. 

\begin{theoremEnd}{thm}
\label{thm:jr:intervention}
Let $G(T)$ be an evolving Jackson-Rogers network at time $T$ with $K$ types and parameters $N_S$, $N_D$, $N_F$, and $1 > \alpha > \frac{1}{K}$. For each $i \in [I]$, we intervene on the first phase of the evolution by setting the number of similar and dissimilar initial friends to $N_S^{(T + i)}$ and $N_D^{(T + i)}$, respectively, while the total number of initial friends is kept fixed: $N_S^{(T + i)} + N_D^{(T + i)} = N_S + N_D$. Assuming $T >> I$: 

\begin{enumerate}
    \item At time $T + I$, the expected effect of $i^{th}$~intervention on network integration is approximately
    \begin{equation}
    \label{eq:interv_immediate}
        -\frac{1}{N(T + I)} \Big[1 + \frac{N_F}{N T}(I - i) \frac{K\alpha - 1}{K - 1} \Big] \Delta N_S^{(T + i)}
        ,
    \end{equation}
    where $\Delta N_S^{(T + i)} = N_S^{(T + i)} - N_S$.

    \item At time $t$ when a long time is passed from $T + I$ and the network is evolved with the original parameters~$N_S$~and~$N_D$ after the intervention period, the expected effect of $i^{th}$~intervention on network integration is approximately
    \begin{equation}
    \label{eq:interv_long}
        -\frac{1}{N} \big(\frac{t}{T}\big)^{\frac{N_F}{N} \frac{K\alpha - 1}{K - 1} - 1} \Delta N_S^{(T + i)}
        .
    \end{equation}
\end{enumerate}

\end{theoremEnd}
\begin{proofEnd}
We use the same notation as the proof of Theorem~\ref{thm:identity_1}. 

\noindent
\textit{Proof of part 1.} Applying Equation~\ref{eq:diff_eq_Pi} for the $i^{th}$ intervention, we have:
\begin{equation}
\label{eq:diff_eq_Pi_i_th_interv}
    \mP_{t_0}^{T + i} = \frac{N m_r}{T + i - 1} \mB_r^{(T + i)} + \frac{m_s}{T + i - 1} \mB_s \big(\mPi_{t_0}^T + \sum_{j=1}^{i-1} \mP_{t_0}^{T+j} \big)
    .
\end{equation}
Note that for $t \le t_0$, $\mP_{t_0}^{t}$ and $\mPi_{t_0}^{t}$ are zero by definition. Equation~\ref{eq:diff_eq_Pi_i_th_interv} is recursive and can be expanded:
\begin{align}
\label{eq:diff_eq_Pi_i_th_interv_expanded}
    \mP_{t_0}^{T + i} &= \frac{N m_r}{T + i - 1} \mB_r^{(T + i)} \nonumber \\
    &+ \frac{m_s}{T + i - 1} \mB_s \Big[ \mPi_{t_0}^T 
    + m_s \mB_s \mPi_{t_0}^T \sum_{j=1}^{i-1} \frac{1}{T+j-1} 
    + N m_r \sum_{j=1}^{i-1} \frac{\mB_r^{(T+j)}}{T+j-1}
    \Big] + O(\frac{1}{T^3})
    .
\end{align}

We use the notation~$\Delta (variable)$ to show the change of a $variable$ due to interventions, i.e., how a $variable$ is changed as $N_S^{(T+i)}$ deviates from $N_S$. Rewriting Equation~\ref{eq:diff_eq_Pi_i_th_interv_expanded} with this notation and dropping $O(\frac{1}{T^3})$:
\begin{equation}
    \Delta\mP_{t_0}^{T+i} = \frac{N m_r}{T + i - 1} \Delta\mB_r^{(T + i)}
    + \frac{N m_r m_s}{T^2} \mB_s \sum_{j=1}^{i-1} \Delta \mB_r^{(T+j)}
    .
\end{equation}
Here, we used the fact that $\Delta \mPi_{t_0}^T$ is zero as future interventions have no effect on previous connections. Now we can find the effect on $\mPi_{t_0}^{T+I}$. For~$t_0 \le T$, we have:
\begin{align}
\label{eq:delta_Pi}
    \Delta\mPi_{t_0}^{T+I} = \sum_{i=1}^I \Delta\mP_{t_0}^{T+i}
    &= N m_r \sum_{i=1}^I \frac{\Delta\mB_r^{(T+i)}}{T+i-1}
    + \frac{N m_r m_s}{T^2} \mB_s \sum_{i=1}^I \sum_{j=1}^{i-1} \Delta\mB_r^{(T+j)} \nonumber \\
    &= N m_r \sum_{i=1}^I \frac{\Delta\mB_r^{(T+i)}}{T+i-1}
    + \frac{N m_r m_s}{T^2} \mB_s \sum_{i=1}^{I-1} \Delta\mB_r^{(T+i)} (I - i) \nonumber \\
    &= N m_r \sum_{i=1}^I \Delta\mB_r^{(T+i)} \Big[ \frac{1}{T + i - 1} + \frac{m_s (I - i) }{T^2} \mB_s \Big]
    .
\end{align}
Note that similar equation can be found for~$t_0 > T$ if we start the sum from~$i=t_0 - T + 1$ in Equation~\ref{eq:delta_Pi}. Next, we can find the change in number of monochromatic edges connected to node~$t_0$ ($t_0 \le T$) until~$T+I$:
\begin{align}
    \Delta \text{mono}_{t_0}^{T+I} &= \Tr(\mP \Delta\mPi_{t_0}^{T+I}) \nonumber \\
    &= N m_r \sum_{i=1}^I \Big[ \frac{1 }{T + i - 1} \Tr(\mP \Delta\mB_r^{(T+i)}) + \frac{m_s (I - i)}{T^2} \Tr(\mP \Delta\mB_r^{(T+i)} \mB_s) \Big] \nonumber \\
    &= N m_r \sum_{i=1}^I \Big[\frac{1}{T + i - 1} \big(\frac{\Delta N_S^{(T+i)}}{N_S + N_D} \big) 
    + \frac{m_s (I - i)}{T^2} \big( \frac{\Delta N_S^{(T+i)}}{N_S + N_D} \frac{K \alpha - 1}{K - 1} \big) \Big] \nonumber \\
    &= \sum_{i=1}^I \Big[ \frac{1}{T+i-1} + \frac{m_s (I - i)}{T^2} \frac{K \alpha - 1}{K - 1} \Big] \Delta N_S^{(T+i)}
    .
\end{align}
To be concise, we didn't go through the calculation of $\Tr(\cdot)$ here but it follows a similar technique as used in the proof of Theorem~\ref{thm:identity_1}. Again, similar results can be obtained for $t_0 > T$ if we start the sum from $i=t_0 - T + 1$. Finally, we find the change in total number of monochromatic edges until~$T+I$
\begin{equation}
    \Delta \text{mono}^{T+I} = \sum_{t_0 = 1}^{T+I-1} \Delta \text{mono}_{t_0}^{T+I} = \sum_{i=1}^I \Big[ 1 + \frac{m_s (I-i)}{T} \frac{K \alpha - 1}{K - 1} \Big] \Delta N_S^{(T+i)}
    ,
\end{equation}
which gives the change of integration at time~$T + I$ as:
\begin{equation}
    \Delta f(T+I) = -\frac{\Delta \text{mono}^{T+I}}{N (T + I)} 
    = -\frac{1}{N (T + I)} \sum_{i=1}^I \Big[ 1 + \frac{m_s (I-i)}{T} \frac{K \alpha - 1}{K - 1} \Big] \Delta N_S^{(T+i)} 
    .
\end{equation}
This completes the first part of the proof.

\noindent
\textit{Proof of part 2.} Equation~\ref{eq:diff_eq_Pi} explains the network dynamic after interventions. Approximating $P_{t_0}^{t+1} = \mPi_{t_0}^{t+1} - \mPi_{t_0}^t$ by $\frac{\partial \mPi_{t_0}^t}{\partial t}$ gives a differential equation for $\mPi_{t_0}^t$ that has the general solution:
\begin{equation}
    \mPi_{t_0}^t = \frac{N m_r}{m_s} (\frac{t}{c_1})^{m_s \mB_s} \mB_s^{-1} \mB_r \mC_2
\end{equation}
where $c_1$ and $\mC_2$ should be determined by initial conditions.\footnote{Note that we could embed $c_1$ into $\mC_2$ but the current representation makes it easier for us to apply initial conditions.} Let $c_1 = T + I$, the initial condition requires:
\begin{equation}
    \frac{N m_r}{m_s} \mB_s^{-1} \mB_r \mC_2 = \mPi_{t_0}^{T+I}
    .
\end{equation}
The linear relationship of $\mC_2$ and $\mPi_{t_0}^{T+I}$ makes it easy to use our $\Delta$ notation:
\begin{equation}
    \Delta \mC_2 = \frac{m_s}{N m_r} \mB_r^{-1} \mB_s \Delta\mPi_{t_0}^{T+I}
    ,
\end{equation}
where $\mPi_{t_0}^{T+I}$ can be plugged in from Equation~\ref{eq:delta_Pi}. We can then find the change of $\mPi_{t_0}^t$ due to interventions:
\begin{equation}
    \Delta\mPi_{t_0}^t = (\frac{t}{T+I})^{m_s \mB_s} \Delta\mPi_{t_0}^{T+I}
    .
\end{equation}

The change in number of monochromatic edges connected to node~$t_0$ ($t_0 \le T$) is:
\begin{align}
    \Delta\text{mono}_{t_0}^t &= \Tr(\mP \Delta\Pi_{t_0}^t) \nonumber \\
    &= \sum_{\mu=0}^{\infty} \frac{1}{\mu!} (m_s \ln(\frac{t}{T+I}))^\mu \Tr\big( \mP \mB_s^\mu \Delta\mB_r^{(T+i)} \Delta\Pi_{t_0}^{T+I} \big) \nonumber \\
    &= \frac{N m_r}{T} \sum_{\mu=0}^{\infty} \frac{1}{\mu!} (m_s \ln(\frac{t}{T+I}))^\mu \sum_{i=1}^I \Tr(\mP \mB_s^\mu \Delta\mB_r^{(T+i)}) 
    + O(\frac{1}{T^2}) \nonumber \\
    &= \frac{N m_r}{T} \sum_{\mu=0}^{\infty} \frac{1}{\mu!} (m_s \ln(\frac{t}{T+I}))^\mu \sum_{i=1}^I \frac{K-1}{K} d_s^\mu \Delta d_r^{(T+i)}
    + O(\frac{1}{T^2}) \nonumber \\
    &= \frac{N m_r}{T} \frac{K-1}{K} \sum_{i=1}^I (\frac{t}{T+I})^{m_s d_s} \Delta d_r^{(T+i)}
    + O(\frac{1}{T^2}) \nonumber \\
    &= \frac{1}{T} \sum_{i=1}^I (\frac{t}{T+I})^{m_s d_s} \Delta N_S^{(T+i)}
    + O(\frac{1}{T^2}) 
    .
\end{align}
Here, we used a result from the proof of Theorem~\ref{thm:identity_1} to calculate the trace function (Equation~\ref{eq:trace_calc}). We should also note that similar results can be obtained for $t_0 > T$ by starting the sum from $i=t_0 - T + 1$. Now we can find the effect on the total number of monochromatic edges:
\begin{equation}
    \Delta\text{mono}^t = \sum_{t_0=1}^t \Delta\text{mono}_{t_0}^t
    = \sum_{i=1}^I (\frac{t}{T+I})^{m_s d_s} \Delta N_S^{(T+i)}
    + O(\frac{1}{T}) 
    .
\end{equation}
Finally, dropping $O(\frac{1}{T})$ from $\Delta\text{mono}^t$, the effect on integration is:
\begin{equation}
    \Delta f(t) = -\frac{\Delta \text{mono}^{t}}{N t} 
    = -\frac{1}{N} \sum_{i=1}^I (\frac{t}{T})^{m_s d_s} \Delta N_S^{(T+i)} 
\end{equation}
where $d_s=\frac{K\alpha - 1}{K - 1}$ and $m_s=\frac{N_F}{N}$.

\end{proofEnd}

Equation~\ref{eq:interv_immediate} shows two different ways that the $i^{th}$~intervention changes the network: the first term in the parenthesis corresponds to the direct impact on initial friends of the node~$T+i$, and the second term explains how future nodes amplify this initial effect through triadic closure. Important observations can be made from the first part of the Theorem~\ref{thm:jr:intervention} which are summarized in the following corollary.

\begin{corollary}
The immediate effect of an intervention on the network of Theorem~\ref{thm:jr:intervention} is
\begin{enumerate}
    \item Independent of other interventions,
    \item Negatively proportional to the change of $N_S$,
    \item Higher if the intervention is applied earlier,
\end{enumerate}
as long as the period that interventions are applied is very shorter than the age of the network ($T >> I$).
\end{corollary}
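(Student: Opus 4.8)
The plan is to read all three properties directly off the closed-form expression for the immediate effect established in part~1 of Theorem~\ref{thm:jr:intervention}. That equation gives the effect of the $i^{th}$ intervention on the integration at time $T+I$ as
\[
    -\frac{1}{N(T + I)} \Big[1 + \frac{N_F}{N T}(I - i) \frac{K\alpha - 1}{K - 1} \Big] \Delta N_S^{(T + i)},
\]
so that the total change $\Delta f(T+I)$ is the sum of these contributions over $i \in [I]$. Since Theorem~\ref{thm:jr:intervention} already operates in the regime $T \gg I$, the corollary inherits this standing assumption, and each of the three claims reduces to inspecting the coefficient multiplying $\Delta N_S^{(T+i)}$.

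For the first claim I would observe that the $i^{th}$ summand depends on the intervention parameters only through $\Delta N_S^{(T+i)}$; no term $\Delta N_S^{(T+j)}$ with $j \neq i$ appears in it. Hence the contribution of each intervention is additive and independent of the others, which is exactly part~(1).

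For the remaining two claims I would analyze the bracketed factor $g(i) = 1 + \frac{N_F}{N T}(I-i)\frac{K\alpha - 1}{K-1}$. The key sign observation is that the hypothesis $\alpha > \frac{1}{K}$ forces $K\alpha - 1 > 0$, while $K - 1 > 0$ and $N_F, N, T > 0$; combined with $I - i \ge 0$ for $i \in [I]$, this yields $g(i) \ge 1 > 0$. Consequently the coefficient $-\frac{1}{N(T+I)}\,g(i)$ of $\Delta N_S^{(T+i)}$ is strictly negative, establishing part~(2). Finally, because the coefficient of $(I-i)$ inside $g$ is positive, $g(i)$ is strictly decreasing in $i$, so the magnitude $\frac{1}{N(T+I)}\,g(i)$ of the per-unit effect is largest for the smallest~$i$; that is, an intervention applied earlier has a larger effect, which is part~(3).

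The argument presents no genuine mathematical obstacle beyond careful sign bookkeeping: the only substantive input is that $\alpha > 1/K$ renders $\frac{K\alpha-1}{K-1}$ positive, which is precisely the theorem's hypothesis. The one point requiring care is interpreting ``higher'' in part~(3) as the magnitude of the effect (legitimate because part~(2) shows all contributions share a single sign), and keeping in mind that the entire expression is valid only as the $T \gg I$ approximation already justified in the proof of Theorem~\ref{thm:jr:intervention}.
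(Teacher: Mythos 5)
Your proof is correct and follows essentially the same route as the paper's: both read all three properties directly off the closed-form expression in Equation~\ref{eq:interv_immediate}, using that each summand involves only $\Delta N_S^{(T+i)}$ for independence, that $\alpha > \frac{1}{K}$ makes $K\alpha - 1 > 0$ and hence the coefficient negative, and that the bracketed factor grows with $I - i$ so earlier interventions have larger effect. Your treatment is, if anything, slightly more careful than the paper's in spelling out the sign bookkeeping and the interpretation of ``higher'' as magnitude.
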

\begin{proof}
The first argument is obvious from Equation \ref{eq:interv_immediate}; the effect of $i^{th}$~intervention only depends on $i$. The second argument comes from the fact that $K\alpha - 1$ is always positive as $\alpha$ is assumed to be larger than $\frac{1}{K}$. So, the coefficient behind $\Delta N_S$ in Equation~\ref{eq:interv_immediate} is always negative. Finally, the effect of the $i^{th}$ intervention varies with $I - i$, so the older an intervention, the larger its effect. 
\end{proof}

The immediate effect of interventions (Equation~\ref{eq:interv_immediate}) might look in contrast to the long-term effect (Equation~\ref{eq:interv_long}). In fact, reducing $N_S$ has a positive impact on the number of bichromatic edges, which is sublinear in time. However, the number of total new edges is also increasing linearly over time, and integration is the ratio of these two numbers: $\frac{sublinear(t)}{t + T}$. As we assumed the network was old enough ($T >> 1$), in the short term, the relative change of the total number of edges is small, and the effect is driven by $\approx \frac{sublinear(t)}{T}$. However, in the long term, the change in the total number of edges is not negligible, and network integration follows $\approx \frac{sublinear(t)}{t}$.

Now that we can predict the expected effect an intervention has on the network, we can design optimum interventions to maximize network integration. However, there are always some constraints, e.g., the stability of the network, that limit the change a network can tolerate. We model all of these constraints as a limit on the rate of the change. The next theorem shows there is a greedy solution for optimum interventions subject to this constraint. 

\begin{thm}
The optimum interventions of Theorem~\ref{thm:jr:intervention} such that
\begin{align*}
    &\max_{\{N_S^{(T+j)}\}_{j\in[I]}}  f(T + I) \\
    &\; s.t. \;\; \forall j \in [I-1]: f(T + j + 1) - f(T + j) \le \Delta
    ,
\end{align*}
where $f(t)$ is network integration at time $t$, can be found greedily from:
\begin{equation}
    \Delta N_S^{(T + j)} = \max \Big\{
    -N_S, 
    -N T \Delta \, \big(\frac{T}{T-j}\big) + \Big[ 1 - \frac{N_F}{N} \frac{K\alpha - 1}{K-1} \Big] \frac{1}{T-j} \sum_{i=1}^{j-1} \Delta N_S^{(T + i)}  
    \Big\}
    .
\end{equation}
If $N_S \ge N T \Delta (\frac{T}{T - 2 I})$, there is a closed-form solution for optimum interventions during $j \in [I]$:
\begin{equation}
\label{eq:optimum_interv}
    \Delta N_S^{(T + j)} = -N T \Delta \Big( 1 + \frac{1}{T} - \frac{N_F}{N T} \frac{K\alpha - 1}{K - 1} \Big)^{j - 1}
    .
\end{equation}
These interventions achieve $f(T + I) - f(T) = I \, \Delta$.
\end{thm}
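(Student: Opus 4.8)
The plan is to reduce the problem to a telescoping argument driven by the explicit intervention–response formula of Theorem~\ref{thm:jr:intervention}, and then to invert that formula step by step. First I would record, for every intermediate time $T+j$, the integration as an affine function of the interventions applied so far. Since a newly arriving node cannot retroactively affect connections formed before it, the effect of the interventions $\{N_S^{(T+i)}\}_{i\le j}$ on $f(T+j)$ is exactly the quantity in part (1) of Theorem~\ref{thm:jr:intervention} with the endpoint $I$ replaced by $j$, namely $-\frac{1}{N(T+j)}\sum_{i=1}^{j}\bigl[1+\frac{N_F}{NT}\frac{K\alpha-1}{K-1}(j-i)\bigr]\Delta N_S^{(T+i)}$, where $\Delta N_S^{(T+i)}=N_S^{(T+i)}-N_S$. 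Because $T\gg I$, the baseline trajectory is already within $O(T^{-(N_S+N_D)/N})$ of its equilibrium value by Theorem~\ref{thm:identity_1}, so its step-to-step change is negligible and the constraint $f(T+j+1)-f(T+j)\le\Delta$ becomes, to leading order, a bound on the increment of this induced quantity.

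Next I would establish optimality by telescoping. Writing $f(T+I)-f(T)=\sum_{j}\bigl(f(T+j)-f(T+j-1)\bigr)$ and applying the per-step constraint termwise gives $f(T+I)-f(T)\le I\Delta$, with equality exactly when every consecutive constraint is tight. Hence, whenever a schedule that saturates all constraints is feasible, it is automatically optimal and attains $I\Delta$; this already yields the last assertion of the theorem under the stated hypothesis. For the general greedy claim, where the feasibility floor on $N_S$ may prevent saturation at some steps, I would use an exchange argument whose engine is the sign condition $\alpha>1/K$: since the amplification coefficient $\frac{K\alpha-1}{K-1}$ is positive, reducing $N_S$ at an early step raises integration not only immediately but at every subsequent step through triadic closure, so there is never an incentive to hold back an early intervention in order to benefit a later one. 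This monotone, no-tension structure is what makes a forward, saturate-when-possible rule globally optimal rather than merely locally.

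To construct the saturating schedule I would impose $f(T+j)-f(T+j-1)=\Delta$ for each $j$ in turn and solve for $\Delta N_S^{(T+j)}$. Subtracting the induced-quantity expressions at $j$ and $j-1$ isolates $\Delta N_S^{(T+j)}$ together with a correction term proportional to $\frac{K\alpha-1}{K-1}\sum_{i<j}\Delta N_S^{(T+i)}$ coming from the amplification of all earlier interventions; rearranging yields the stated forward recursion. The only genuine constraint on a single intervention is feasibility, $N_S^{(T+j)}\ge 0$, i.e.\ $\Delta N_S^{(T+j)}\ge -N_S$, and clipping the recursion at this floor produces the outer $\max\{-N_S,\cdot\}$. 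Finally, when the floor never binds the recursion is affine with a constant forcing term, so the ratio $\Delta N_S^{(T+j)}/\Delta N_S^{(T+j-1)}$ is constant to leading order and the solution is geometric, giving the closed form; I would then check that the hypothesis $N_S\ge NT\Delta\frac{T}{T-2I}$ is precisely what keeps the extremal term $|\Delta N_S^{(T+I)}|$ below $N_S$, so that no clipping occurs and $f(T+I)-f(T)=I\Delta$ is achieved.

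I expect the main obstacle to be the careful leading-order bookkeeping in the inversion step: the greedy coefficients and the geometric ratio are governed by the $O(1/T)$ amplification term, so one must track which $1/T$ corrections survive and which are absorbed into the $T\gg I$ approximations, distinguishing the roles of $T$, $T+j$, and $T-j$ in the various denominators. Getting the sign and size of this amplification correction right, and confirming that it propagates through the recurrence to the claimed ratio, is the delicate part; by comparison the telescoping optimality bound and the feasibility clipping are routine.
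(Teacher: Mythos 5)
Your plan follows essentially the same route as the paper's proof: linearize the per-intervention effect using part (1) of Theorem~\ref{thm:jr:intervention}, observe that the rate constraints form a linear system in which each $\Delta N_S^{(T+j)}$ tightens only its own step while relaxing all later ones, saturate greedily with clipping at $-N_S$, and solve the resulting recursion (the paper uses a $Z$-transform where you difference consecutive terms, but both give the same geometric form), with your telescoping bound $f(T+I)-f(T)\le I\Delta$ being a slightly cleaner way to certify optimality in the unclipped case. One correction: the sign fact that powers the forward-greedy/exchange step is $\frac{N_F}{N}\frac{K\alpha-1}{K-1}-1<0$ --- an early intervention's contribution to each \emph{later increment} $f(T+j)-f(T+j-1)$ is negative, hence relaxes the later rate constraints --- not merely that $\frac{K\alpha-1}{K-1}>0$; raising the level of $f$ at all subsequent times does not by itself rule out tightening the subsequent rate constraints.
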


\begin{proof}
Equation~\ref{eq:interv_immediate} can be expanded to the first order of~$\frac{1}{T}$ as:
\begin{equation}
\label{eq:interv_immediate_first_order}
    -\frac{1}{N T} \Big[1 + \frac{N_F}{N T}(I - i) \frac{K\alpha - 1}{K - 1} - \frac{I}{T} \Big] \Delta N_S^{(T + i)}
    .
\end{equation}
The change of integration from time $T+j-1$ to $T+j$ due to an intervention at time $T+i$ ($i < j$) is
\begin{equation}
    \frac{1}{N T^2} \Big[ \frac{N_F}{N} \frac{K \alpha - 1}{K - 1} - 1 \Big] (-\Delta N_S^{(T+i)})
    ,
\end{equation}
where we simply found the difference of Equation~\ref{eq:interv_immediate_first_order} for $I=j$ and $I=j-1$. Now we can rewrite the rate of the change constraint from time $T+j-1$ to $T+j$ as:
\begin{equation}
\label{eq:constraint_delta_j}
    \frac{1}{N T}[1 - \frac{j}{T}] (-\Delta N_S^{(T+j)}) + 
    \sum_{i=1}^{j-1} \frac{1}{N T^2} \Big[ \frac{N_F}{N} \frac{K \alpha - 1}{K - 1} - 1 \Big] (-\Delta N_S^{(T+i)})
    \le \Delta
    .
\end{equation}
This is a linear constraint in terms of $\{\Delta N_S^{(T+i)}\}_i$. The objective function is also linear:
\begin{equation}
    f(T+I) = \frac{1}{N T} \sum_{j=1}^I \Big[1 + \frac{N_F}{N T}(I - j) \frac{K\alpha - 1}{K - 1} - \frac{I}{T} \Big] (-\Delta N_S^{(T + j)}) = \sum_{j=1}^I c_j (-\Delta N_S^{(T + j)})
    .
\end{equation}
Here $c_j$ is positive and decreasing in $j$. Let $-\Delta N_S^{(T+j)}=x_j$ ($j\in[I]$) be the optimum solution of the problem. We argue that for any $j\in[I]$, $x_j$ is
\begin{equation}
    \min \Big\{N_S,
    N T \Delta (\frac{T}{T-j}) + \Big[ 1 - \frac{N_F}{N} \frac{K\alpha - 1}{K-1} \Big] \frac{1}{T-j} \sum_{i=1}^{j-1} x_i \Big\}
    .
\end{equation}
Otherwise, we could increase $x_j$ to make the constraint of Equation~\ref{eq:constraint_delta_j} binding. This increase does not violate other constraints, since $\frac{N_F}{N} \frac{K\alpha - 1}{K-1} - 1 < 0$.

Now if $N_S \ge N T \Delta (\frac{T}{T - 2 I})$, we have
\begin{equation}
    N T \Delta (\frac{T}{T-j}) + \Big[ 1 - \frac{N_F}{N} \frac{K\alpha - 1}{K-1} \Big] \frac{1}{T-j} \sum_{i=1}^{j-1} x_i 
    \le N T \Delta (\frac{T}{T - I}) + N_S \frac{I}{T - I} \le N_S.
\end{equation}
So, $x_j \le N_S$ is never binding and 
$x_j \approx N T \Delta + \Big[ 1 - \frac{N_F}{N} \frac{K\alpha - 1}{K-1} \Big] \frac{1}{T} \sum_{i=1}^{j-1} x_i$ for all $j \in [I]$. This is a recursive equation for $x_j$. Let's define $y_j = \sum_{i=1}^j x_j$. The recursive definition for $y_j$ will be: 
\begin{equation}
    y_j - y_{j-1} = N T \Delta 
    + \frac{1}{T} \Big[ 1 - \frac{N_F}{N} \frac{K\alpha - 1}{K-1} \Big]  y_{j-1}
\end{equation}
and $y_1 = x_1 =  N T \Delta$. Taking $Z$-Transform from this recursive equation gives
\begin{equation}
    Y(z) = \frac{N T \Delta z^{-1}}{(1 - z^{-1})(1 - (1 + \frac{1}{T} - \frac{N_F}{N T} \frac{K \alpha - 1}{K - 1})z^{-1})}.
\end{equation}
By taking $Z^{-1}$-transform of $Y(Z)$ one can see
\begin{equation}
    y_j = \frac{N T \Delta}{\frac{1}{T} - \frac{N_F}{N T} \frac{K \alpha  - 1}{K - 1}}
    \Big[(1 + \frac{1}{T} - \frac{N_F}{N T} \frac{K \alpha  - 1}{K - 1})^j - 1 \Big], \;\; j \ge 1
\end{equation}
and Equation~\ref{eq:optimum_interv} can be obtained by $\Delta N_S ^{(T+j)} = -x_j = y_{j-1} - y_j$.

\end{proof}

\section{Triadic Closure in A Fixed-Node Evolving Model}
\label{sec:asikainen}

\citet{asikainen} propose a model with a fixed number of nodes and edges where the network evolves through random edge addition and triadic closure. The authors argue that triadic closure increases observed homophily relative to homophilous random link formation, i.e., triadic closure has a negative relative effect.

Here, we show that this result is specific to their definition of triadic closure which favors monochromatic wedges. In contrast, triadic closure is often studied in settings where wedges do not exhibit such a bias \cite{easley_kleinberg_2010}. Empirical work on real-world networks also supports this unbiased wedge closing assumption \cite{kossinets2006empirical}. We therefore study a variant of the \citet{asikainen} model where triadic closure does not differentiate between monochromatic and bichromatic wedges. 

We first present the model: Consider a network with a random initial structure and where nodes belong to one of two groups. At each iteration, a \textit{focal node} is selected uniformly at random. Then a \textit{candidate node} is chosen by triadic closure with probability~$c$ or uniformly at random with probability $1-c$. The parameter~$c$ controls the relative impact of triadic closure in the evolution of the network. Let $\theta$ be the focal node type and $\theta'$ be the candidate node type. A link is formed between focal and candidate nodes with probability $S_{\theta, \theta'}'$ if the candidate is selected by triadic closure and $S_{\theta, \theta'}$ otherwise. Following \citet{asikainen}, we assume $S_{\theta, \theta'} = s$ and $S_{\theta, \theta'}' = s'$ if $\theta = \theta'$, and $S_{\theta, \theta'} = 1 - s$ and $S_{\theta, \theta'}' = 1 - s'$ otherwise. To keep the number of edges constant while network is evolving, a random edge connected to the focal node is removed whenever it forms a new edge with a candidate node. 

In the original model of \citet{asikainen}, $s' = s$ and homophily is imposed by setting $s > \frac{1}{2}$. Following the definitions above, we argue setting $s' = s$ adds extra homophily to triadic closure. Instead, to be consistent with our definition of triadic closure, we set $s'=\frac{1}{2}$. That is, we do not distinguish between monochromatic and bichromatic wedges. The result below shows how this change to an unbiased triadic closure setting leads to results consistent with observations in the SBM and Jackson-Rogers models. 

\begin{theoremEnd}{thm}
\label{thm:asikainen}
For a fixed-node evolving network $G$ with two equiprobable types and parameters~$s$ and $s'=\frac{1}{2}$, triadic closure has a positive relative effect on network integration if and only if $1 > s > \frac{1}{2}$, compared to a random link formation. 
\end{theoremEnd}
\begin{proofEnd}
Let $\Theta = \{1, 2\}$ be the set of possible types of nodes and  $n_\theta$ shows the relative size of the group with type $\theta \in \Theta$. 
We define the following variables:
\begin{itemize}
    \item $(T^k)_{\theta'|\theta}$: starting from a node of type $\theta$ and going to neighbor nodes, the probability of ending in a node of type $\theta'$ in $k$ steps.
    \item $P_{\theta', \theta}$: the probability that a randomly selected edge is between nodes of types $\theta'$ and $\theta$ ($P_{\theta', \theta} = P_{\theta, \theta'}$ by definition). 
    \item $M_{\theta'|\theta}$: the probability that in an iteration a new edge is formed between nodes of types $\theta'$ and $\theta$ given that the focal node is of type $\theta$.
\end{itemize}
The mean field approximation for $M_{\theta'|\theta}$ is:
\begin{equation}
\label{eq:asikainen_m}
    M_{\theta'|\theta} = (1 - c) n_{\theta'} S_{\theta', \theta} + c (T^2)_{\theta'|\theta} S'_{\theta', \theta}
    ,
\end{equation}
where the first term corresponds to the case that the candidate node is found uniformly at random and the second term corresponds to triadic closure. Let's define one step in time equivalent to $L$~iterations where $L$ is the total number of links. The mean field approximation for time derivative of $P_{\theta, \theta}$ is:
\begin{equation}
\label{eq:asikainen_p}
    \frac{dP_{\theta, \theta}}{dt} = n_\theta M_{\theta|\theta} - n_\theta (\sum_{\theta' \in \Theta} M_{\theta'|\theta}) T_{\theta|\theta}
    .
\end{equation}
Here, the first term is the probability that the focal node is of type~$\theta$ and has created a link to a similar candidate node. The second term is the probability that after successfully creating a new edge, an edge between two nodes of type~$\theta$ is removed. As we only have two types of nodes in our setup, we use $\bar{\theta}$ to show the other type: $\{\bar{\theta}\} = \Theta \setminus \{\theta\}$. By plugging Equation~\ref{eq:asikainen_m} into Equation~\ref{eq:asikainen_p}:
\begin{align}
\label{eq:asikainen_p_expanded}
    \frac{dP_{\theta, \theta}}{dt} =&
    n_\theta \big[ (1 - c) n_\theta s + c (T^2)_{\theta|\theta} \big] \nonumber \\
    &- n_\theta T_{\theta|\theta} \big[(1-c) n_\theta s + c (T^2)_{\theta|\theta} s' + (1 - c)n_{\bar{\theta}} (1 - s) + c (T^2)_{\bar{\theta}|\theta}(1 - s')\big]
    .
\end{align}
The $(T^2)$ terms can be expanded recursively:
\begin{equation}
    (T^2)_{\theta'|\theta} = \begin{cases}
        T^2_{\theta|\theta} + T_{\bar{\theta}|\theta} T_{\theta|\bar{\theta}} = T^2_{\theta|\theta} + (1 - T_{\theta|\theta})(1 - T_{\bar{\theta}|\bar{\theta}}) & \theta'=\theta \\
        T_{\theta|\theta} T_{\bar{\theta}|\theta} + 
        T_{\bar{\theta}|\theta} T_{\bar{\theta}|\bar{\theta}} = T_{\theta|\theta}(1 - T_{\theta|\theta}) + (1 - T_{\theta|\theta}) T_{\bar{\theta}|\bar{\theta}} & \theta'=\bar{\theta}
    \end{cases}
    .
\end{equation}
We used $T_{\bar{\theta}|\theta} = 1 - T_{\theta|\theta}$ in obtaining the above equation. By plugging this expansion into Equation~\ref{eq:asikainen_p_expanded}, $dP_{\theta, \theta}/{dt}$ will be a function of $T_{\theta|\theta}$ and $T_{\bar{\theta}|\bar{\theta}}$ only. 

Although we cannot directly solve these differential equations, we can look into equilibrium that happens at the fixed points of equations. In fixed points, $dP_{\theta, \theta}/{dt} = 0$ for all $\theta \in \Theta$:
\begin{align}
    \frac{dP_{1, 1}}{dt}(T_{1|1}, T_{2|2}) = 0 \\
    \frac{dP_{2, 2}}{dt}(T_{1|1}, T_{2|2}) = 0
    .
\end{align}

By solving these coupled equations for $T_{1|1}$ and $T_{2|2}$, we find all fixed points. Then based on the second derivatives, we keep only the stable ones (look at \citet{asikainen} for details on distinguishing stable fixed points). 

Finally, there is a one-to-one relationship between $T_{1|1}, T_{2|2}$ and $P_{1, 1}, P_{2, 2}$:
\begin{equation}
    T_{\theta|\theta} = \frac{2 P_{\theta, \theta}}{2 P_{\theta, \theta} + P_{\theta, \bar{\theta}}} 
    = \frac{2 P_{\theta, \theta}}{1 + P_{\theta, \theta} - P_{\bar{\theta}, \bar{\theta}}}, \;\; \theta \in \Theta
\end{equation}
that can be solved to find $P_{1, 1}$ and $P_{2, 2}$ in terms of $T_{1|1}$ and $T_{2|2}$. The network integration is simply $1 - P_{1, 1} - P_{2, 2}$. Figure~\ref{fig:asikainen_integ_vs_s} shows network integration at stable fixed points for $n_1 = n_2 = \frac{1}{2}$ and $s'=\frac{1}{2}$, while varying $s$. One can see that increasing $c$ consistently increases (decreases) integration when $s > \frac{1}{2}$ ($s < \frac{1}{2}$).

\end{proofEnd}

Note that the condition $1 > s > 1/2$ corresponds to the setting where random link formation is homophilous. To better understand the extent to which Theorem~\ref{thm:asikainen} applies, we depict network integration theoretically estimated at equilibrium in Figure~\ref{fig:asikainen_integ_vs_s}. We have also marked simulated results with crosses to show that the theory and empirical observations closely match one another. We note that as we increase the impact of triadic closure by increasing $c$, integration increases if $s > \frac{1}{2}$ and decreases if $s < \frac{1}{2}$. There are two extreme cases to observe: In the case of no triadic closure ($c=0$), integration falls linearly with respect to $s$. On the other extreme, when edges only form via triadic closure, i.e., $c=1$, there are two possibilities: if groups of different types are initially completely segregated, the integration will always be zero regardless of $s$. If the network is not completely segregated, the resulting integration will be $0.5$ as there was no homophily. Another interesting observation is that even when the network is maximally homophilous ($s=1$), for large enough $c$, triadic closure will not let the integration go to zero. In sum, our above result in Theorem \ref{thm:asikainen} and corresponding simulations show that triadic closure works against segregation in homophilous networks. 

\begin{figure}[t]
    \centering
    \includegraphics[width=0.49\linewidth]{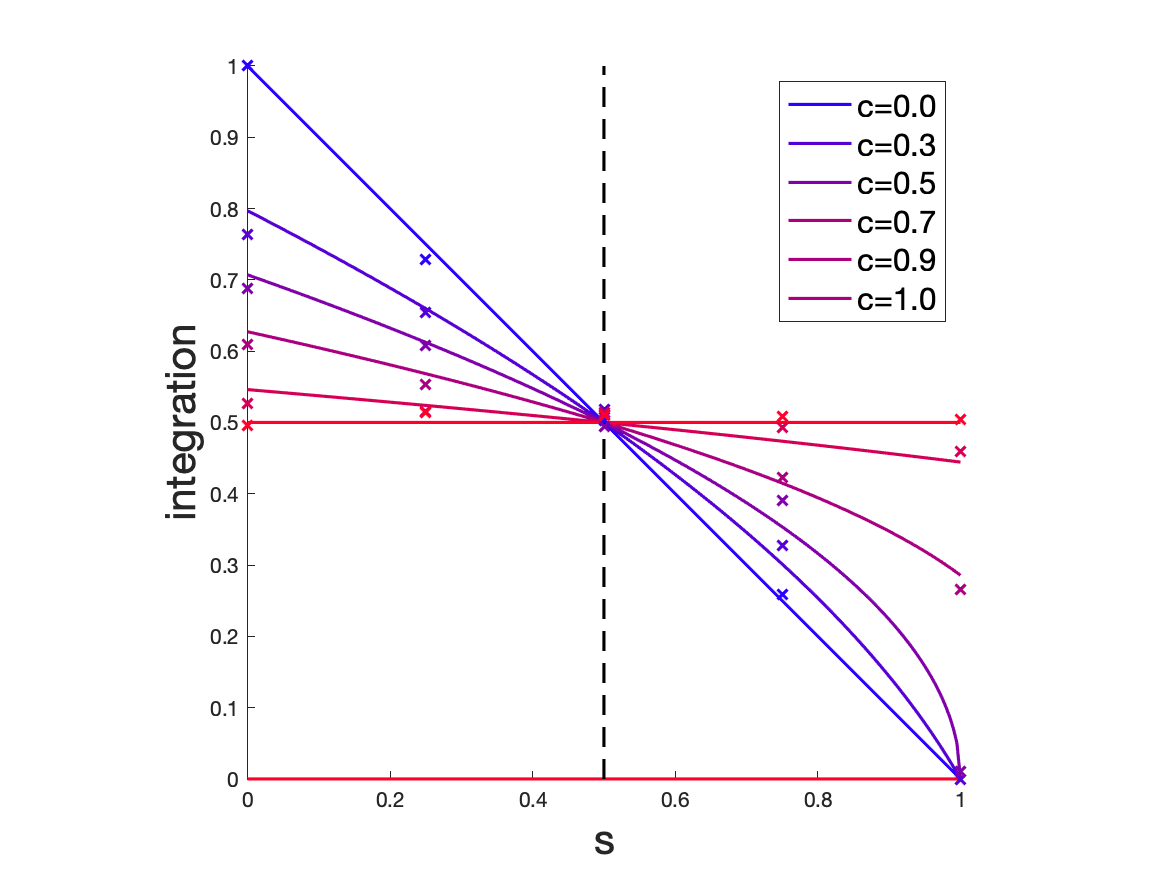}
    \caption{Network integration obtained theoretically from the fixed-node evolving model with equiprobable types and $s'=\frac{1}{2}$. Simulation results are also marked with crosses.}
    \Description{Network integration is plotted vs $s$ for a fixed-node evolving model with equiprobable types and $s'=\frac{1}{2}$. In the presence of homophily, triadic closure improves network integration.}
    \label{fig:asikainen_integ_vs_s}
\end{figure}
\section{Experiments}
\label{sec:exp}

Our results so far focus on theoretical observations for the expected behavior of network properties under some approximations, for a large number of nodes, and in the limit of $t \rightarrow \infty$. In this section, we examine the validity of our results both through analysis of real data and simulations. Here we discuss the applicability of Theorem~\ref{thm:identity_1} on real data and present simulation-based evidence in Section~\ref{sec:sim} of the appendix.

\subsection{Data: Citation Networks}

The citation network we study here is known to be captured well by the Jackson-Rogers model \cite{bramoulle2012homophily,jr}. Several factors make the citation network consistent with this model: First, papers---which correspond to nodes on this graph---appear sequentially and do not disappear. Likewise, citations---which correspond to edges on this graph---are directed and do not disappear over time. Third, researchers often use an initial seed of articles as a foundation for their work and use the citation network to identify further related works, similar to the second phase of the Jackson-Rogers model. 

We use the network of citations extracted mainly from DBLP, ACM, and MAG (Microsoft Academic Graph)~\cite{citation_network} (version 12). In this dataset, each paper is labeled with weighted fields of study. We use the field of study with the highest weight as the node type. The original dataset covers papers published mainly from 1960 to 2020. However, areas of study and access to articles have experienced a tremendous change during the last decades. We therefore focus on a shorter period of 2015 to 2020 to ensure network parameters are not varying over time. This period consists of more than 1.5 million articles with more than 4.7 million intra-citations, i.e., citations within the studied network. 

\subsection{Estimation of Model Parameters}
The citation network consists of papers from various fields. We limit our analysis to major fields of study, which we define to be fields that appear in at least $1$ percent of articles. Only $3$ percent of papers are not related to any major field. 

Despite the growth of interdisciplinary works, different fields of study still follow different publication traditions, resulting in different model parameters. We therefore first cluster the fields of study and then fit a separate model on each cluster, neglecting inter-cluster citations. Variation across clusters also provides further ability to test the validity of our theoretical findings. 

\vspace{2mm}
\noindent \textbf{Clustering Fields of Study.} In order to cluster fields, we obtain a weighted graph over major fields: Nodes correspond to fields in this graph and the weight of edge~$(f_1, f_2)$ corresponds to how many times a paper in field~$f_1$ has cited another paper with field~$f_2$ or vice versa. Note that papers may have more than one field of study. We then use spectral clustering to obtain clusters of major fields. Here we selected the number of clusters to be $6$ based on the eigenvalues of the graph's Laplacian matrix. Table~\ref{tab:cluster_summary} shows some statistics of these clusters. We chose the names of the clusters, looking at their most frequent fields.

\begin{table}[t]
    \centering
    \begin{tabular}{c|c|c|c}
         Cluster & Num. of Fields & Num. of Papers & Num. of within-cluster citations   \\
         \hline
         Mathematics & 13 & 487,298 & 954,544 \\
         Artificial Intelligence & 22 & 1,170,199 & 3,586,018 \\
         Knowledge Management & 15 & 386,073 & 617,090 \\
         Electrical Engineering & 20 & 474,773 & 1,172,721 \\
         Software Engineering & 4 & 85,845 & 90,723 \\
         Control Engineering & 7 & 254,453 & 355,930
    \end{tabular}
    \caption{Summary statistics of the clusters of major fields reported.}
    \label{tab:cluster_summary}
\end{table}

\vspace{2mm} 
\noindent \textbf{Estimation of Each Cluster's Model Parameters.} Assuming the network evolves according to the Jackson-Rogers model, we want to estimate model parameters from data. These parameters include $N_S$, $N_D$, $N_F$, and $\alpha$. To account for the randomness of real data, we add extra randomness here: at each time step, when a new node $u$ arrives, it draws model parameters from
\begin{align}
    N_S^{(u)}&\sim\exp \left( {\frac{1}{n_s} } \right) \nonumber\\
    N_D^{(u)}&\sim\exp\left( { \frac{1}{n_d} } \right) \nonumber\\
    N_{F,S}^{(u)}=\alpha^{(u)} N_F^{(u)}&\sim\exp\left( { \frac{1}{n_{f,s}} } \right) \nonumber \\
    N_{F,D}^{(u)}=\left( { 1-\alpha^{(u)} } \right) N_F^{(u)}&\sim\exp\left( { \frac{1}{n_{f,d}} } \right)
    .
\end{align}
Here $\exp(\lambda)$ corresponds to an exponential distribution with mean $\frac{1}{\lambda}$. We chose exponential priors only for simplicity. We believe similar results can be obtained with other positive distributions as well. Our goal is to estimate $\theta=(n_s, n_d, n_{f, s}, n_{f,d})$. 

In order to estimate $\theta$, we need to distinguish edges created during phase one and phase two. Let $G(u)=(\gV(u), \gE(u))$ be the induced subgraph of the citation graph over node (paper) $u$'s immediate descendants. Note, $G(u)$ does not include $u$. We want to know among all the edges from $u$ to $\gV(u)$ which ones are formed in the first and second phases. There is no way to distinguish first and second-phase connections. \citet{bramoulle2012homophily} suggest that if $(v, w) \in \gE(u)$, then $(u, v)$ is formed initially and $(u, w)$ is formed in the second phase due to triadic closure. However, we believe this assumption will add a bias to our estimation from model parameters. It is also not clear how to decide when there is a third node $x$ such that $(w, x) \in \gE(u)$ (Figure~\ref{fig:induced_graph_ex}). We propose an approach to estimate model parameters with minimum assumptions in the following.

Let $\phi_u: \gV(u) \rightarrow \{1, 2\}$ be the \textit{phase assignment function} for node $u$. $\phi_u(v)$ determines whether $(u, v)$ is created at the first or second phase. We call a phase assignment function \textit{feasible} if for every $w \in \gV(u)$ such that $\phi_u(w)=2$, there exists a $v \in \gV(u)$ such that $(v, w) \in \gE(u)$ and $\phi_u(v)=1$. In other words, if $(u, w)$ is assigned to be shaped in the second phase, there should be at least one mediator node that $u$ could find $w$ through it. Figure~\ref{fig:feasible_assignments_ex} shows an example of all feasible assignments of a graph with four nodes. 

\begin{figure}[t]
\centering
\begin{minipage}{0.24\linewidth}
    \centering
    \includegraphics[height=2cm]{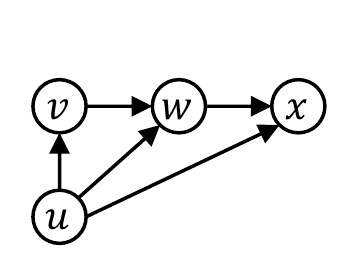}
    \caption{An example graph consists of $u$ and $G(u)$.}
    \Description{An example graph consists of edges~$(u,v)$, $(u,w)$, $(u,x)$, $(v,w)$, and $(w,x)$.}
    \label{fig:induced_graph_ex}
\end{minipage}
\hfill
\begin{minipage}{0.74\linewidth}
    \centering
    \includegraphics[height=2cm]{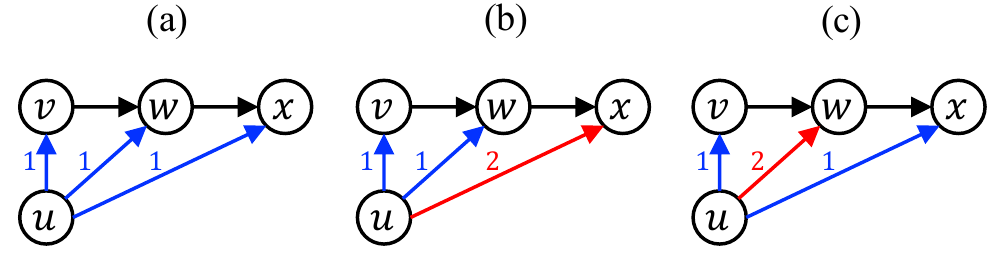}
    \caption{All feasible assignments for the edges from $u$ to $G(u)$. The numbers on the edges indicate the phase of edge formation.}
    \Description{Three possible assignments for the example graph are shown.}
    \label{fig:feasible_assignments_ex}
\end{minipage}
\end{figure}

Given an assignment function and a $type(\cdot)$ function, it is straightforward to find first phase parameters for node $u$:
\begin{align}
    n_s^{(u)}(\phi_u) &= |\{v \in \gV(u) \mid \phi_u(v)=1, type(u)=type(v)\}| \\
    n_d^{(u)}(\phi_u) &= |\{v \in \gV(u)  \mid \phi_u(v)=1, type(u)\neq type(v)\}|
    .
\end{align}
However, suppose an edge like $(u, w)$ is formed in phase two, and $w$ has immediate ancestors in $G(u)$ with both similar and dissimilar types to $u$. In that case, it is not clear whether $u$ and $w$ are connected through similar initial friends of $u$ or dissimilar initial friends. Here we look at the ratio of $w$'s immediate ancestors which are similar to $u$ and use this number as an estimate for $n_{f,s}^{(u)}$. 
\begin{align}
    n_{f,s}^{(u)}(\phi_u) &= \sum_{w: \phi_u(w)=2} 
    \frac{|\{v \in \gV(u) \mid (v, w)\in\gE(u), \phi_u(v)=1, type(v)=type(u)\}|}
    {|\{v \in \gV(u) \mid (v, w)\in\gE(u), \phi_u(v)=1\}|} 
\end{align}

We can do the same for ancestors which are dissimilar to $u$ to estimate $n_{f,d}^{(u)}$:

\begin{align}
 n_{f,d}^{(u)}(\phi_u) &= \sum_{w: \phi_u(w)=2} 
    \frac{|\{v \in \gV(u) \mid (v, w)\in\gE(u), \phi_u(v)=1, type(v)\neq type(u)|\}}
    {|\{v \in \gV(u) \mid (v, w)\in\gE(u), \phi_u(v)=1\}|}
   \end{align}

Let $\Phi_u$ denote the set of all feasible assignments for node~$u$. We assume a uniform distribution over $\Phi_u$. We can now find the likelihood of observing $G(u)$ given $\theta=(n_s, n_d, n_{f,s}, n_{f,d})$:
\begin{equation}
    l_u(\theta) = \frac{1}{|\Phi_u|} \sum_{\phi_u \in \Phi_u}
    \frac{1}{n_s\, n_d\, n_{f,s}\, n_{f, d}}
    \exp{\Big(-\frac{n_s^{(u)}(\phi_u)}{n_s}-\frac{n_d^{(u)}(\phi_u)}{n_d}-\frac{n_{f,s}^{(u)}(\phi_u)}{n_{f,s}}-\frac{n_{f,d}^{(u)}(\phi_u)}{n_{f,d}}\Big)}
\end{equation}

Finally, we maximize the likelihood of observing the whole cluster as it is to find the optimum parameters:
\begin{align}
    \theta^* = \argmax_{\theta} \sum_u \log{(l_u(\theta))}  
    .
\end{align}

We use the BFGS algorithm for the optimization; note, however, that this is a non-convex problem, and there is no guarantee that we can find the global maximum.

\begin{figure}[ht!]
\centering
\begin{minipage}{0.45\linewidth}
    \centering
    \includegraphics[width=1\linewidth]{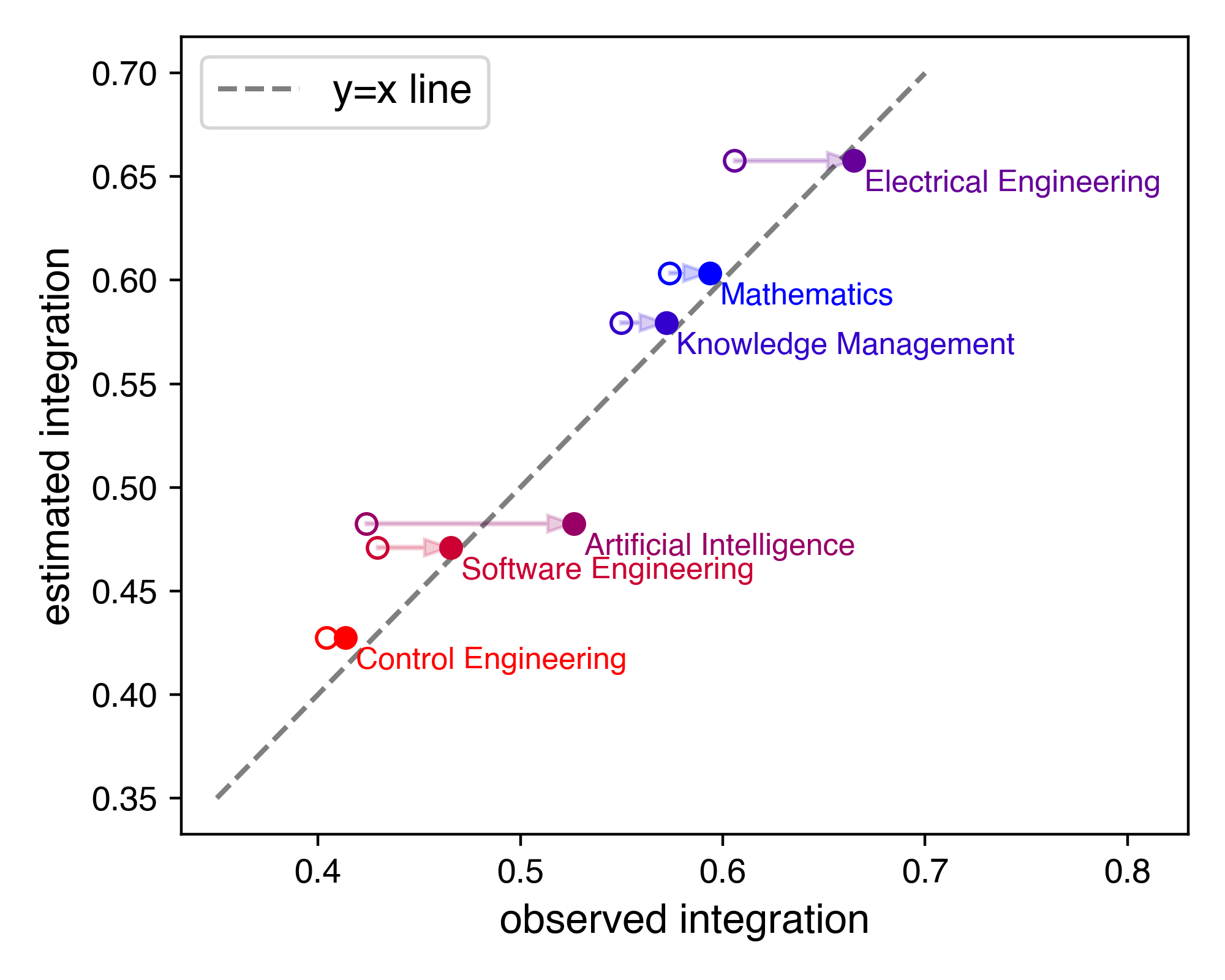}
    \caption{Estimated vs. observed integration. Theorem~\ref{thm:identity_1} is used to estimate network integration for different clusters. Empty and filled circles correspond to beginning and end year of the study, respectively.}
    \Description{Estimated and observed integration are very close except for one field.}
    \label{fig:observed_and_estimated_integ}
\end{minipage}
\hfill
\begin{minipage}{0.45\linewidth}
    \centering
    \includegraphics[width=1\linewidth]{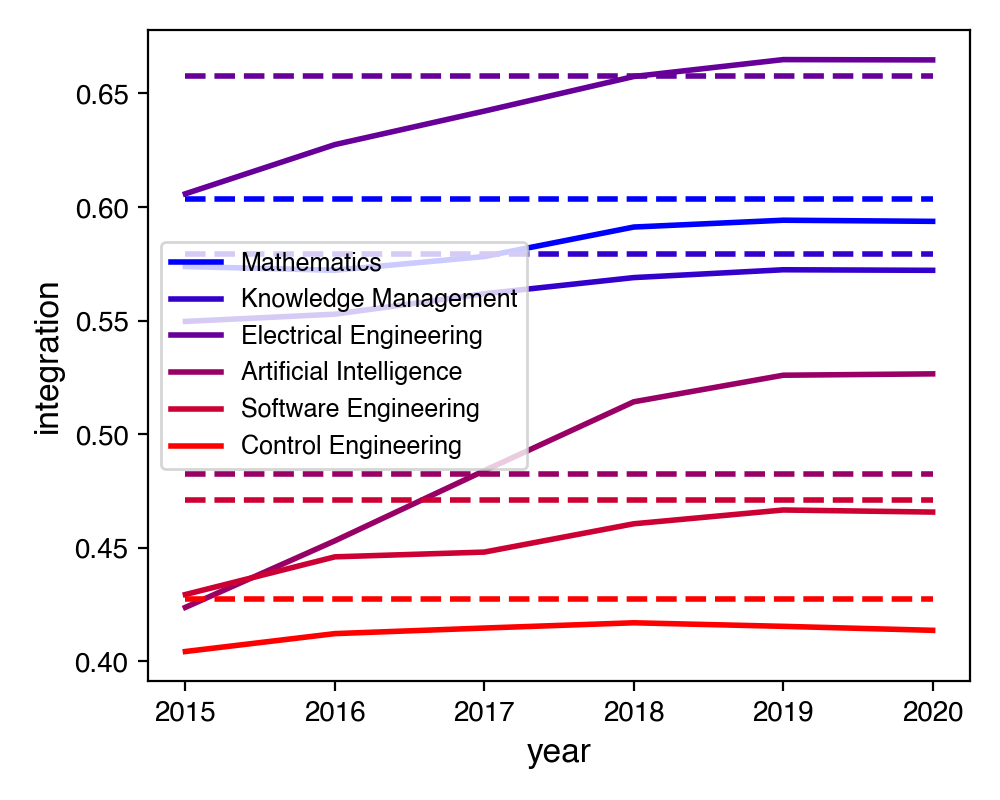}
    \caption{Convergence behavior of observed integration (solid) to estimated integration (dashed). With the exception of one field (Artificial Intelligence), network integration in other fields has converged to the predicted value.}
    \Description{Observed integration converges to the estimated integration by time except for one field.}
    \label{fig:observed_and_estimated_integ_in_time}
\end{minipage}
\end{figure}


\subsection{Results of Citation Network Analysis}
We use the obtained optimum parameters $\theta^*$ and Theorem~\ref{thm:identity_1} to estimate network integration in equilibrium. Figure~\ref{fig:observed_and_estimated_integ} shows the estimated integration in equilibrium versus the observed integration. In this figure, empty and filled marks correspond to the starting year (2015) and final year (2020) respectively. With the exception of one cluster (Artificial Intelligence), different clusters consistently approach our estimated values from equilibrium. The convergence behavior is also depicted in Figure~\ref{fig:observed_and_estimated_integ_in_time}. These empirical insights show that even with the assumptions needed for Theorem~\ref{thm:identity_1}, the theoretical insights closely match practice in this dataset. As we see from Figure~\ref{fig:observed_and_estimated_integ}, although clusters have various fields with different frequencies, their behavior in equilibrium is well-predicted from the theory with only a few parameters. Our empirical findings present further evidence that the Jackson-Rogers model explains citation network evolution. Finally, with estimated parameters, we find triadic closure to be responsible for $3$-$5\%$ of network integration. We do so by setting $N_F=0$ in Theorem~\ref{thm:identity_1}, as a proxy for network integration without triadic closure. 
\section{Further Related Works} 
\label{sec:related_works}

Homophily is a robust and prevalent process impacting network formation in many domains~\cite{lazarsfeld1954friendship,mcpherson2001birds,newman2002assortative}. There is a long line of theoretical and empirical work exploring the effect of homophily on network formation, ranging from observational studies on large network data, to laboratory experiments, to analyses of theoretical models \cite{ag,dong2017structural,gru}. 

A main topic of focus has been the interaction between homophily and network segregation. For instance, \citet{currarini2009economic,pnasschelling} show that segregated networks emerge due to homophily. In related work, \citet{ka} study the effect of homophily on the rich-get-richer phenomena. Empirical work has explored the effect of homophily on integration in settings like college campuses \cite{mp}. In related work to ours, \citet{bramoulle2012homophily} adapt the Jackson-Rogers model to the case with heterogeneous nodes. Their work primarily focuses on how each node's likelihood to form links changes over time. In contrast, we consider a global measurement of integration, using the fraction of bichromatic edges.


Triadic closure is another well-studied process in network formation dating back to the 1950s~\cite{kossinets2006empirical,rapoport1953spread}. While there is a long line of work on the effect of triadic closure on network clustering, the interplay between homophily and triadic closure remains under-explored. In one complementary related work, \citet{altenburger2018monophily} show that monophily---the presence of individuals with preference for attributes unrelated to their own---has a tendency to induce similarity among friends-of-friends. In contrast, we study the relationship of homophily and triadic closure, though some of the findings complement our observations. 

The closest work to ours is that by \citet{asikainen}, which explores the interaction of triadic closure and homophily. Here, we are similarly concerned with how these two phenomena interact in dynamic models. \citet{asikainen} consider a model that also starts with an SBM and adds both triadic closure and random link rewiring. Both of these additions are influenced by choice homophily. Under this model, \citet{asikainen} show that triadic closure amplifies the effects of homophily. We note, however, the model considered here already has the triadic closure step influenced by homophily. In our work, we make minimal adjustments to their model to further isolate the effects of triadic closure and find results consistent with the SBM and Jackson-Rogers models. 

Segregation in social networks can limit individuals' ability to access information, resources, and opportunities, leading to the creation or exacerbation of disparities across groups. Research across various disciplines has modeled and measured the impact of segregation on social welfare including its impacts on access to information, economic development, educational outcomes, labor market outcomes, and social capital and support~\cite{banerjee2013diffusion,CalvoJackson,ccz,krishna,dimaggio2011network,eagle2010network,jackson2012social}
Recent work, such as by \citet{avin2015glass} has proposed and studied models that explain how inequality and disparities in access to opportunity arise in certain settings. 

Our work has additional implications for network-based interventions both in on- and off-line settings. For instance, it is well-known that
biases that may exist on online platforms such as Twitter and Task Rabbit may lead to inequalities between groups~\cite{twitter,taskrabbit}. These biases, amplified by recommendation algorithms, can impact how networks grow and evolve creating an \emph{algorithmic glass ceiling}~\cite{biega2018equity,stoica2018algorithmic,su2016effect,biega2018equity}. In recent years, there has been interest by researchers in algorithmically-informed interventions that can help better diagnose and mitigate underlying patterns of inequality on platforms~\cite{abebe2018mechanism,abebe2020roles}. Focused on fairness in recommender systems, researchers have examined the effect of small interventions on the long-term health of the platform such as by mitigating segregation, improving interactions, and achieving other desirable societal objectives~\cite{ekstrand2016behaviorism,guy2015social,hutson2018debiasing,knijnenburg2016recommender,schnabel2018short,stoica2018algorithmic,su2016effect}. These studies have shown that the platform designer, by using small interventions when a user first joins, may be able to realize large gains on the platform health over time. 

\section{Discussion and Conclusion}
\label{sec:discussion}

In this work, we consider the effect of triadic closure on network segregation. Through analyses of different static and dynamic network formation models, we find that triadic closure has the effect of increasing network integration, indicating that it may be a process that counteracts homophily in network formation. 

We find it striking that such a tension should exist between two such well-studied social processes as homophily and triadic closure. In addition to the theoretical and empirical results tackled in this work, we believe this counter-intuitive result about the relationship between homophily and triadic closure points to a rich and under-explored phenomenon about their interaction. 

These results also open up questions related to other measurements of network health, such as network expansion and distribution of network centralities. Each of these points to challenging analytic questions. Empirically, it would also be interesting to shed light on what types of social and information networks tend to exhibit a stronger relationship between triadic closure and homophily.

Finally, the interventions presented in this work point to a broader set of theoretical and empirical questions. For instance, it would be interesting to estimate the various network parameters and compare the effect of nudges across different distributions of values. Furthermore, such interventions are often costly to the designer
or may incur social cost, leading to a set of optimization questions where the designer must trade off these costs
with utility gained from network integration.

\begin{acks}
We thank Sera Linardi, Emma Forman Ling, Irene Lo, Ashudeep Singh, Ana-Andreea Stoica, Sam Taggart, Bryan Wilder, Angela Zhou, and members of the MD4SG Working Group on Inequality for helpful discussions throughout the evolution of this work. We especially thank Emma Forman Ling for numerous discussions and pointers to the citation network. We additionally thank the reviewers, area chairs, track chairs, and program chairs of EC '22 for their insightful feedback. 
\end{acks}

\bibliographystyle{template/ACM-Reference-Format}
\bibliography{refs}

\newpage
\appendix
\section{Additional Statements and Proofs}
\label{sec:missing_proofs}

\begin{lemma}
\label{lemma:n_edges_sbm}
For any $\text{SBM}(p, q)$~network~$G$ with $K \ge 2$~types each consisting of $n_k$~nodes ($k \in [K]$), the expected number of monochromatic edges is
\begin{align}
    e_m = \sum_{k \in [K]} {n_k \choose 2} p \nonumber = \frac{1}{2} p \sum_k n_k^2 + O(n)
    ,
\end{align}
and the expected number of bichromatic edges is
\begin{align}
    e_b = \frac{1}{2} \sum_{k \in [K]} \sum_{l \in [K], l \neq k} n_k n_l q = \frac{1}{2} q (n^2 - \sum_k n_k^2) + O(n)
    ,
\end{align}
where $n = \sum_k n_k$. Further, the expected number of monochromatic \emph{missing} edges is
\begin{align}
    o_m = \sum_{k \in [K]} {n_k \choose 2} (1 - p) \nonumber = \frac{1}{2} (1 - p) \sum_k n_k^2 + O(n)
    ,
\end{align}
and the expected number of bichromatic missing edges is
\begin{align}
    o_b = \frac{1}{2} \sum_{k \in [K]} \sum_{l \in [K], l \neq k} n_k n_l (1 - q) = \frac{1}{2} (1 - q) \big(n^2 - \sum_k n_k^2 \big) + O(n)
    ,
\end{align}
where we call an edge~$(i,j)$ a missing edge if $i$ and $j$ are not connected in $G$.
\end{lemma}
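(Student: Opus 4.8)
The plan is to use linearity of expectation, treating each unordered pair of distinct nodes as an independent Bernoulli trial whose success probability is determined solely by whether the two endpoints share a type. First I would partition the candidate pairs into \emph{monochromatic} pairs (both endpoints in the same group) and \emph{bichromatic} pairs (endpoints in different groups). The number of monochromatic pairs is exactly $\sum_{k\in[K]}\binom{n_k}{2}$, since within group $k$ there are $\binom{n_k}{2}$ ways to choose two nodes. The number of bichromatic pairs is $\frac{1}{2}\sum_{k}\sum_{l\neq k} n_k n_l$, where the factor $\frac{1}{2}$ corrects for each unordered cross-group pair being counted twice in the ordered double sum.

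Next I would attach the edge probabilities. Under the SBM each monochromatic pair is present as an edge independently with probability $p$ and absent with probability $1-p$, while each bichromatic pair is present with probability $q$ and absent with probability $1-q$. By linearity of expectation the expected counts are immediately $e_m = p\sum_k\binom{n_k}{2}$, $e_b = \frac{q}{2}\sum_k\sum_{l\neq k}n_k n_l$, $o_m = (1-p)\sum_k\binom{n_k}{2}$, and $o_b = \frac{1-q}{2}\sum_k\sum_{l\neq k}n_k n_l$. This already yields the first (exact) expression in each of the four displayed identities.

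Finally I would convert the combinatorial sums into the stated closed forms. For the monochromatic terms I would expand $\binom{n_k}{2} = \frac{1}{2}(n_k^2 - n_k)$ and sum, giving $\sum_k\binom{n_k}{2} = \frac{1}{2}\sum_k n_k^2 - \frac{1}{2}n$; multiplying by $p$ (respectively $1-p$) and absorbing the linear remainder into $O(n)$ produces $e_m = \frac{1}{2}p\sum_k n_k^2 + O(n)$ and the analogous form for $o_m$. For the bichromatic terms I would use the identity $\sum_k\sum_{l\neq k}n_k n_l = \big(\sum_k n_k\big)^2 - \sum_k n_k^2 = n^2 - \sum_k n_k^2$, obtained by expanding $n^2=(\sum_k n_k)^2$ and subtracting the diagonal; multiplying by $\frac{q}{2}$ and $\frac{1-q}{2}$ then gives $e_b$ and $o_b$ exactly as stated (here the $O(n)$ slack is in fact zero, since the bichromatic pair count is exact).

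The argument is essentially a bookkeeping exercise, so there is no substantive obstacle; the only place demanding care is the factor-of-$\frac{1}{2}$ accounting for the bichromatic pairs together with the algebraic identity $\sum_k\sum_{l\neq k}n_k n_l = n^2 - \sum_k n_k^2$, which must be applied consistently so that cross-group pairs are neither double-counted nor conflated with the within-group pairs already assigned probability $p$.
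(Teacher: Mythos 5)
Your proof is correct and follows essentially the same route as the paper's: count monochromatic pairs as $\sum_k \binom{n_k}{2}$ and bichromatic pairs as $\tfrac{1}{2}\sum_k\sum_{l\neq k}n_k n_l$, then multiply by the appropriate presence/absence probability via linearity of expectation. Your additional algebraic expansion into the $O(n)$ forms is just the bookkeeping the paper leaves implicit, and it is accurate (including the observation that the bichromatic count is exact).
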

\begin{proof}
To find $e_m$ ($o_m$), we sum the number of unordered pairs from each group times the probability they are connected (not connected). To find $e_b$ ($o_b$), we count the number of bichromatic pairs times the probability they are connected (not connected) times $\frac{1}{2}$ to compensate for repeated counting. 
\end{proof}

\begin{lemma}
\label{lemma:n_wedges_sbm}
For the same network as Lemma \ref{lemma:n_edges_sbm}, the expected number of monochromatic wedges is
\begin{align}
    w_m &= \sum_{k \in [K]} {n_k \choose 2} (n_k - 2) p^2 (1 - p) 
    + \sum_{k \in [K]} {n_k \choose 2} (n - n_k) q^2 (1 - p) \nonumber \\
    &= \frac{1}{2} p^2 (1 - p) \sum_k n_k^3 + 
    \frac{1}{2} q^2 (1 - p)\big(n \sum_k n_k^2 - \sum_k n_k^3 \big) + O(n^2)
    ,
\end{align}
and the expected number of bichromatic wedges is
\begin{align}
    w_b &= \sum_{k \in [K]} \sum_{l \in [K], l \neq k} n_k (n_k - 1) n_l p q (1 - q) + 
    \sum_{k \in [K]} \sum_{l \in [K], l \neq k} \frac{1}{2} n_k n_l (n - n_k - n_l) q^2 (1 - q) \nonumber \\
    &= p q (1 - q) \big( n \sum_k n_k^2 - \sum_k n_k^3 \big)
    + \frac{1}{2} q^2 (1 - q) \big(n^3 + 2 \sum_k n_k^3 - 3n \sum_k n_k^2 \big) + O(n^2)
\end{align}
\end{lemma}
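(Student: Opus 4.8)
The plan is to compute both expected wedge counts by linearity of expectation, enumerating every potential wedge according to the group memberships of its three nodes and weighting by the probability that the triple actually forms a wedge. A wedge $(i,h,j)$ requires edges $(i,h)$ and $(h,j)$ to be present and the edge $(i,j)$ to be absent; since these three node-pairs are distinct, the corresponding edge indicators are independent under the SBM, so the wedge probability factorizes into three terms, each equal to $p$, $q$, $(1-p)$, or $(1-q)$ depending on the types involved. The key structural observation is that the status of the closing edge $(i,j)$ is governed solely by whether $i$ and $j$ share a type: it contributes $(1-p)$ for a monochromatic wedge and $(1-q)$ for a bichromatic wedge, irrespective of the type of the apex~$h$.

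First I would count monochromatic wedges, where $i$ and $j$ lie in a common group~$k$. I split on the apex: if $h$ also lies in group~$k$ there are $\binom{n_k}{2}(n_k-2)$ such triples, each forming a wedge with probability $p^2(1-p)$; if $h$ lies outside group~$k$ there are $\binom{n_k}{2}(n-n_k)$ triples, each with probability $q^2(1-p)$. Summing over $k$ gives the stated exact expression for $w_m$. Next I would count bichromatic wedges, where $i\in k$ and $j\in l$ with $k\neq l$, again splitting on the apex into three cases: $h$ shares the group of one endpoint (e.g.\ $h,i\in k$ and $j\in l$), contributing $pq(1-q)$; the symmetric configuration with $h,j\in l$; or $h$ lies in a third group $m\neq k,l$, contributing $q^2(1-q)$. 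Writing the first two cases together as the ordered sum $\sum_k\sum_{l\neq k} n_k(n_k-1)n_l$ captures both endpoint-sharing configurations, whereas the third-group case needs the factor $\tfrac12$ in $\sum_k\sum_{l\neq k}\tfrac12 n_k n_l(n-n_k-n_l)$ to avoid double-counting the unordered endpoint pair, yielding the stated exact form for $w_b$.

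Finally I would expand each combinatorial sum asymptotically. Writing $m_2=\sum_k n_k^2$ and $m_3=\sum_k n_k^3$, I use $\binom{n_k}{2}(n_k-2)=\tfrac12 n_k^3+O(n_k^2)$, the identity $\sum_k n_k(n_k-1)(n-n_k)=nm_2-m_3+O(n^2)$, and the exact identity $\sum_k\sum_{l\neq k} n_k n_l(n-n_k-n_l)=n^3-3nm_2+2m_3$, collecting all lower-order contributions into the $O(n^2)$ remainder to recover the claimed leading-order formulas.

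The main obstacle is the combinatorial bookkeeping in the bichromatic count: one must recognize that summing over ordered type-pairs $(k,l)$ already accounts for both endpoint-sharing apex configurations without overcounting, yet double-counts the third-group configuration (in which the two endpoints play symmetric roles), which is why the compensating factor $\tfrac12$ appears only there. Correctly assigning the $p$ versus $q$ factors to each of the three edges, and tracking which products survive at leading order $O(n^3)$ versus are absorbed into $O(n^2)$, is the delicate part; the asymptotic expansion itself is then routine.
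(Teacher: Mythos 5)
Your proposal is correct and follows essentially the same route as the paper's proof: linearity of expectation over triples, a case split on the apex's group relative to the two endpoints, and independence of the three edge indicators to factor each wedge probability, with the same counts $\binom{n_k}{2}(n_k-2)$, $\binom{n_k}{2}(n-n_k)$, $n_k(n_k-1)n_l$, and $\tfrac12 n_k n_l(n-n_k-n_l)$ for the four configurations. Your treatment of the ordered versus unordered bookkeeping in the bichromatic count and the asymptotic identities is accurate and in fact spelled out in more detail than the paper's own argument.
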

\begin{proof}
For a wedge $i\text{---}j\text{---}k$, the first term of $w_m$ is the expected number of wedges such that $type(i)=type(j)=type(k)$. The second term of $w_m$ corresponds to the case $type(i)=type(k)\neq type(j)$. The first term of $w_b$ is the expected number of wedges such that $type(i)=type(j)\neq type(k)$. The second term of $w_b$ corresponds to the case where $i$, $j$, and $k$ are all from different types.
\end{proof}

\begin{lemma}
\label{lemma:n_m_ineq}
For any set of $\{n_1, n_2, \cdots, n_K | n_i \in \R^+\}$, following inequalities hold:
\begin{enumerate}
    \item $n^2 \ge m_2$
    \item $n m_2 \ge m_3$
    \item $n m_3 \ge m_2^2$
    \item $2 m_2^2 \ge n m_3$
    \item $2 n m_2^2 \ge n^2 m_3 + m_2 m_3$,
\end{enumerate}
where $n = \sum_{k \in [K]} n_k$ and $m_i = \sum_{k \in [K]} n_k^i$.
\end{lemma}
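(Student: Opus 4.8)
The plan is to prove the five inequalities in the order in which they build on one another, treating the first three as the structural facts that underlie the quantities $A=n^2-m_2$, $B=nm_2-m_3$, and $C=m_3A-m_2B$ appearing in the SBM proofs, and the last two as the delicate ones. So the first step is to show (1),(2),(3) are exactly the statements $A\ge0$, $B\ge0$, $C\ge0$.

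For (1) and (2) I would expand the products directly into manifestly nonnegative sums. Writing $n^2=\bigl(\sum_k n_k\bigr)^2=\sum_k n_k^2+\sum_{i\ne j}n_i n_j$ gives $n^2-m_2=\sum_{i\ne j}n_i n_j\ge 0$ since every $n_k>0$; likewise $n m_2=\bigl(\sum_i n_i\bigr)\bigl(\sum_j n_j^2\bigr)=\sum_j n_j^3+\sum_{i\ne j}n_i n_j^2$, so $n m_2-m_3=\sum_{i\ne j}n_i n_j^2\ge 0$. Neither step needs anything beyond positivity of the $n_k$.

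For (3) I would record a sum-of-squares identity rather than merely invoke Cauchy--Schwarz, because the identity is reusable for (4). Symmetrizing over ordered pairs,
\begin{equation*}
  n m_3-m_2^2=\sum_{i,j}n_i n_j^3-\sum_{i,j}n_i^2 n_j^2=\tfrac12\sum_{i,j}n_i n_j\,(n_i-n_j)^2=\sum_{i<j}n_i n_j\,(n_i-n_j)^2\ge 0 .
\end{equation*}
This is equivalent to Cauchy--Schwarz applied to $n_k^{1/2}\cdot n_k^{3/2}$, and it immediately yields $C=n\,(n m_3-m_2^2)\ge0$.

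For (4) and (5) the approach is to reduce each to a homogeneous symmetric polynomial inequality and search for a sum-of-squares (or Schur-type) certificate. Using the identity above, (4) becomes $m_2^2\ge\sum_{i<j}n_i n_j (n_i-n_j)^2$, and (5), written as $2m_2B\ge m_3A$, unfolds to $2n m_2^2\ge n^2 m_3+m_2 m_3$; note that (5) pairs with $C\ge0$ (i.e.\ $m_3A\ge m_2B$) to sandwich $\tfrac{2m_2B}{m_3A}$ in $[1,2]$, which is precisely what the relative-effect proof uses. The main obstacle is exactly here: unlike (1)--(3), neither (4) nor (5) reduces to term-by-term nonnegativity, the factor of $2$ is essential, and the crude bound $(n_i-n_j)^2\le n_i^2+n_j^2$ overshoots. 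I would therefore test both against the extreme configurations -- one dominant group together with many equal small groups, versus fully balanced groups $n_k=n/K$ -- to decide whether they hold outright or require an additional side condition on the relative group sizes (the regime relevant to the ``sufficiently large $n_k$'' hypotheses). Producing the correct certificate, or isolating the correct hypothesis under which the factor-$2$ bounds survive, is the crux of the argument.
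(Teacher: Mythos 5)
Your treatment of parts (1)--(3) is complete and essentially identical to the paper's: the same expansions $n^2-m_2=\sum_{i\ne j}n_in_j$ and $nm_2-m_3=\sum_{i\ne j}n_in_j^2$, and the same sum-of-squares identity $nm_3-m_2^2=\sum_{i<j}n_in_j(n_i-n_j)^2$ for part (3). But you stop short of proving parts (4) and (5): you only propose to test them and to search for a certificate or a side condition. As a proof of the lemma as stated, that is a genuine gap --- the two inequalities that actually require an argument are exactly the two you leave open.

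That said, your suspicion is vindicated, and the failure occurs precisely in the regime you single out (one dominant group together with many equal small groups): parts (4) and (5) are false for general $K\ge 3$. Take $K=40$ with $n_1=6$ and $n_2=\cdots=n_{40}=1$; then $n=45$, $m_2=75$, $m_3=255$, so $2m_2^2=11250<11475=nm_3$ and $2nm_2^2=506250<535500=n^2m_3+m_2m_3$. Both inequalities are homogeneous, so no ``sufficiently large $n_k$'' hypothesis rescues them. The paper's own argument for part (4) rests on the purported identity $2m_2^2-nm_3=\sum_{i\ne j}\bigl[(n_i^2+n_j^2)^2-n_in_j(n_i-n_j)^2\bigr]$, which mishandles the diagonal contribution $\sum_i n_i^4$ and is valid only for $K=2$ (the correct per-pair term is $2n_i^2n_j^2-n_in_j(n_i-n_j)^2$, which can be negative, so no term-by-term argument can work --- exactly as you guessed); part (5) inherits the same defect. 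For $K=2$ both inequalities do hold (for (5), normalizing $n_1=1$, $n_2=t$, the difference reduces to $2(1+t)t^2\ge 0$), and for balanced groups of any $K$ they hold trivially since $m_2=n^2/K$ and $m_3=n^3/K^2$. So the right resolution is not a cleverer certificate but the side condition you anticipated (e.g.\ $K=2$ or balanced groups), after which the remaining verification is short; your proposal as written establishes only (1)--(3).
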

\begin{proof}
We start by the first inequality and use the results to that point in proving each inequality. 

\begin{enumerate}
\item
\begin{align*}
    n^2 - m_2 &= \sum_{i, j} n_i n_j - \sum_j n_j^2 \\
    &= \sum_{i \neq j} n_i n_j \ge 0
    .
\end{align*}

\item
\begin{align*}
    n m_2 - m_3 &= \sum_{i, j} n_i n_j^2 - \sum_j n_j^3 \\
    &= \sum_{i \neq j} n_i n_j^2 \ge 0
    .
\end{align*}

\item
\begin{align*}
    n m_3 - m_2^2 &= \sum_{i, j} n_i n_j^3 - n_i^2 n_j^2 \\
    &= \sum_{i \neq j} n_i n_j^3 + n_j n_i^3 - 2 n_i^2 n_j^2 \\
    &= \sum_{i \neq j} n_i n_j (n_i - n_j)^2 \ge 0
    .
\end{align*}

\item
\begin{align*}
    2 m_2^2 - n m_3 &= \sum_{i, j} 2 n_i^2 n_j^2 - n_i n_j^3 \\
    &= \sum_{i \neq j} n_i^4 + n_j^4 + 4 n_i^2 n_j^2 - n_i n_j^3 - n_j n_i^3 \\
    &= \sum_{i \neq j} (n_i^2 + n_j^2)^2 - n_i n_j(n_i - n_j)^2 
    .
\end{align*}
As $n_i, n_j \ge 0$, $n_i^2 + n_j^2 \ge (\max(n_i, n_j))^2 \ge n_i n_j$ and $n_i^2 + n_j^2 \ge (\max(n_i, n_j))^2 \ge (n_i - n_j)^2$. So, $(n_i^2 + n_j^2)^2 \ge n_i n_j(n_i - n_j)^2$, which gives $2 m_2^2 - n m_3 \ge 0$.

\item 
\begin{align*}
    2 n m_2^2 - n^2 m_3 - m_2 m_3 &= (\sum_{i, j} 2 n_i^2 n_j^2 - n_i n_j^2) n - \sum_{i, j} n_i^2 n_j^3 \\
    &= \sum_{i \neq j} (n_i^4 + n_j^4 + 4 n_i^2 n_j^2 - n_i n_j^3 - n_j n_i^3) n - \sum_{i \neq j} n_i^5 + n_j^5 + n_i^2 n_j^3 + n_j^2 n_i^3 \\
    &= \sum_{i \neq j} ((n_i^2 + n_j^2)^2 - n_i n_j(n_i - n_j)^2 )(n - n_i - n_j) \\
    &+ \sum_{i \neq j} (n_i^4 + n_j^4 + 4 n_i^2 n_j^2 - n_i n_j^3 - n_j n_i^3)(n_i + n_j) - n_i^5 + n_j^5 + n_i^2 n_j^3 + n_j^2 n_i^3 \\
    &\ge \sum_{i \neq j} (n_i^4 + n_j^4 + 4 n_i^2 n_j^2 - n_i n_j^3 - n_j n_i^3)(n_i + n_j) - (n_i^5 + n_j^5 + n_i^2 n_j^3 + n_j^2 n_i^3) \\
    &= \sum_{i \neq k} 2 n_i^2 n_j^3 + 2 n_j^2 n_i^2 \ge 0
    .
\end{align*}
Here we used $(n_i^2 + n_j^2)^2 - n_i n_j(n_i - n_j)^2 \ge 0$ from the proof of the previous part.

\end{enumerate}
\end{proof}

\begin{lemma}
\label{lemma:inv_p}
Let $\mA$ be a $K \times K$ real symmetric matrix such that
\begin{equation}
    [\mA]_{i, j} = \begin{cases}
      c & i = j \\
      \frac{1}{K-1} (1 - c) & o.w.
    \end{cases}
    .
\end{equation}
Then, $\mA$ has $K$ real eignvalues:
\begin{equation}
    d_i = \begin{cases}
      1 & i = 1 \\
      \frac{Kc - 1}{K - 1} & 1 < i \le K
    \end{cases}
    ,
\end{equation}
with corresponding eigenvectors:
\begin{align}
    [\vv_1]_j &= 1 \nonumber \\
    \text{For $i \ge 2$: } [\vv_i]_j &= \begin{cases}
      1 & j = 1 \\
      -1 & j = i \\
      0 & o.w.
    \end{cases}
    .
\end{align}
\end{lemma}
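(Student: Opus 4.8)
The plan is to recognize $\mA$ as a scalar matrix plus a multiple of the all-ones matrix, which diagonalizes everything at once. Writing $\mJ$ for the $K \times K$ all-ones matrix and $\mI$ for the identity, I would first record the identity
\[
\mA = \frac{1-c}{K-1}\,\mJ + \frac{Kc-1}{K-1}\,\mI,
\]
verified by reading off entries: the diagonal of the right-hand side is $\frac{1-c}{K-1} + \frac{Kc-1}{K-1} = c$, and each off-diagonal entry is $\frac{1-c}{K-1}$, matching the definition of $\mA$. Thus $\mA$ is an affine function of $\mJ$, so it shares $\mJ$'s eigenvectors and its eigenvalues are obtained from those of $\mJ$ by the map $\lambda \mapsto \frac{1-c}{K-1}\lambda + \frac{Kc-1}{K-1}$.

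Second, I would invoke the elementary spectrum of $\mJ = \vone\vone^\top$: being rank one, it has eigenvalue $K$ on the all-ones vector $\vone$ and eigenvalue $0$ on the hyperplane $H = \{\vx : \sum_j x_j = 0\}$ of dimension $K-1$. Transferring through the affine map, $\lambda = K$ yields $\frac{K(1-c) + (Kc-1)}{K-1} = 1$, the eigenvalue $d_1$ carried by $\vone = \vv_1$, and $\lambda = 0$ yields $\frac{Kc-1}{K-1}$, the eigenvalue $d_i$ carried by every vector in $H$.

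Third, to match the explicitly listed eigenvectors, I would check that for $i \ge 2$ the vector $\vv_i$ (with $+1$ in coordinate $1$, $-1$ in coordinate $i$, zeros elsewhere) has coordinate sum $0$, hence lies in $H$ and is an eigenvector for $\frac{Kc-1}{K-1}$; and that $\vv_2,\dots,\vv_K$ are linearly independent, since their restrictions to coordinates $2,\dots,K$ form $-\mI_{K-1}$. Being $K-1$ independent vectors inside the $(K-1)$-dimensional space $H$, they span it, so together with $\vv_1$ they form a complete eigenbasis. The argument uses no earlier result and is entirely routine; the only points needing care are the arithmetic $\frac{K(1-c)+(Kc-1)}{K-1}=1$ and the observation that the stated $\vv_i$ genuinely \emph{span} the $0$-eigenspace of $\mJ$ rather than merely sitting inside it. As a sanity check one may instead substitute each $\vv_i$ directly into $\mA\vv_i = d_i\vv_i$ and confirm the two row-sum patterns by hand.
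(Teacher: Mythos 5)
Your proof is correct, but it takes a different route from the paper's. The paper simply verifies the eigenvalue equations by direct computation: for $\vv_1 = \vone$ it observes that every row of $\mA$ sums to $c + (K-1)\cdot\frac{1-c}{K-1} = 1$, and for $i \ge 2$ it computes $[\mA\vv_i]_j$ entrywise and reads off the factor $\frac{Kc-1}{K-1}$. You instead write $\mA = \frac{1-c}{K-1}\,\mJ + \frac{Kc-1}{K-1}\,\mI$ and transfer the known spectrum of the rank-one matrix $\mJ$ through the affine map. Both arguments are valid; yours is more structural and buys something the paper's verification leaves implicit, namely that the listed vectors $\vv_1,\dots,\vv_K$ actually form a complete eigenbasis (you check that $\vv_2,\dots,\vv_K$ are independent and hence span the full $(K-1)$-dimensional zero-sum eigenspace, rather than merely being eigenvectors). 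This completeness is what Lemma~\ref{lemma:inv_V} and the eigendecompositions $\mP_r = \mV\mD_r\mV^{-1}$, $\mP_s = \mV\mD_s\mV^{-1}$ used in Theorem~\ref{thm:identity_1} actually rely on, so your extra step is not wasted. The paper's direct check is shorter; your decomposition also makes it immediately transparent why $\mP_r$ and $\mP_s$ share eigenvectors (both are affine in $\mJ$), a fact the paper asserts separately.
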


\begin{proof} For $i = 1$:
\begin{equation}
    [\mA \vv_1]_j = 1 = d_1 [\vv_1]_j
    .
\end{equation}

For $1 < i \le K$:
\begin{equation}
    [\mA \vv_i]_j = \begin{cases}
      \frac{Kc - 1}{K - 1} & j = 1 \\
      -\frac{Kc - 1}{K - 1} & j = i \\
      0 & o.w.
    \end{cases}
    = d_i [\vv_1]_j
    .
\end{equation}
\end{proof}

\begin{lemma}
\label{lemma:inv_V}
The inverse of the $K \times K$ matrix $\mV$ defined by
\begin{equation}
    [\mV]_{i, j} = \begin{cases}
      1 & i = 1 \text{ or } j = 1 \\
      -1 & i = j > 1 \\
      0 & o.w.
    \end{cases}
\end{equation}
is 
\begin{equation}
    [\mV^{-1}]_{i, j} = \begin{cases}
      -\frac{K-1}{K} & i = j > 1 \\
      \frac{1}{K} & o.w.
    \end{cases}
    .
\end{equation}
\end{lemma}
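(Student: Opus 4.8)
The plan is to verify the claim directly: rather than solving a linear system, I would simply check that the proposed matrix is a right inverse, i.e. that $\mV \mV^{-1} = \mI$. This suffices because $\mV$ is a finite square matrix, so a right inverse is automatically a two-sided inverse. Before computing, it helps to picture $\mV$: its first row and first column are all ones, the remaining diagonal entries are $-1$, and every other entry is zero. (In fact its columns are exactly the eigenvectors $\vv_1,\dots,\vv_K$ of Lemma~\ref{lemma:inv_p}, which is why this inverse is needed.) The proposed inverse $W := \mV^{-1}$ is likewise symmetric, carrying $-\tfrac{K-1}{K}$ on the diagonal beyond the first entry and $\tfrac{1}{K}$ everywhere else.

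First I would write out the $(i,k)$ entry of the product, $[\mV W]_{i,k} = \sum_{j=1}^K [\mV]_{i,j}\,[W]_{j,k}$, and split into two cases according to the structure of the $i$-th row of $\mV$. When $i = 1$ the row is all ones, so the entry collapses to the column sum $\sum_j [W]_{j,k}$; summing a column of $W$ yields $1$ when $k = 1$ (all $K$ entries equal $\tfrac{1}{K}$) and $0$ when $k > 1$ (the $K-1$ off-diagonal $\tfrac{1}{K}$'s cancel the single $-\tfrac{K-1}{K}$). When $i > 1$ the $i$-th row of $\mV$ has only two nonzero entries, a $1$ in column $1$ and a $-1$ in column $i$, so the product reduces to the two-term difference $[W]_{1,k} - [W]_{i,k}$. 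A short subcase check on $k$ (whether $k = 1$, $k = i$, or $k \notin \{1,i\}$) shows this difference equals $\delta_{i,k}$, where the diagonal case $k = i$ uses the identity $\tfrac{1}{K} + \tfrac{K-1}{K} = 1$.

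There is no genuine obstacle here beyond bookkeeping: the whole argument is a finite case analysis, and the only nontrivial cancellations are $(K-1)\cdot\tfrac{1}{K} - \tfrac{K-1}{K} = 0$ and $\tfrac{1}{K} + \tfrac{K-1}{K} = 1$. The main thing to keep straight is the index condition ``$>1$'', since the value $-\tfrac{K-1}{K}$ appears only on the diagonal strictly below the first row and column, never in the first row, first column, or off the diagonal. Having verified $[\mV W]_{i,k} = \delta_{i,k}$ in every case, I would conclude that $\mV W = \mI$ and hence $W = \mV^{-1}$, as claimed.
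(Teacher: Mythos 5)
Your proof is correct and takes essentially the same route as the paper, which likewise verifies directly that $\mV \mV^{-1} = \mI$ entry by entry; your version just spells out the row-wise case analysis that the paper leaves implicit. No gaps.
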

\begin{proof}
\begin{equation}
    [\mV \mV^{-1}]_{i, j} = \begin{cases}
      1 & i = j = 1 \\
      \frac{1}{K} + \frac{K - 1}{K} = 1 & i = j > 1 \\
      0 & o.w. 
    \end{cases}
    .
\end{equation}
\end{proof}

\printProofs

\section{Simulations}
\label{sec:sim}

First of all, we investigate stochastic block models through simulation. Specifically, we test Theorem~\ref{thm:sbm:gamma_rel_eff_on_integ} as Theorems~\ref{thm:sbm:abs_eff_on_integ}~and~\ref{thm:sbm:rel_eff_on_integ} can be seen as special cases of this theorem for $\gamma=0$ and $\gamma \rightarrow \infty$, respectively. From Theorem~\ref{thm:sbm:gamma_rel_eff_on_integ} we expect triadic closure to have positive relative effect when $u(\gamma) > \frac{p}{q} > l(\gamma)$. To test this theorem, we have simulated a stochastic block model consisting of $K$ groups, where group $k$ has $n_k=10 \, \lambda^k$ members. We use various values of $\frac{p}{q}$ and $\lambda$ in simulations. Figures~\ref{fig:sim_vs_theory(lambda1_k5)}~and~\ref{fig:sim_vs_theory(lambda2_k2)} show the theoretical and simulated results together. Each mark on the figures shows specific values of $\frac{p}{q}$ and $\gamma$ that triadic closure has had a positive relative effect on average. One can see despite having small networks, Theorem~\ref{thm:sbm:gamma_rel_eff_on_integ} well predicts the effect both for balanced and unbalanced networks.

\begin{figure}[t]
\begin{minipage}{0.49\linewidth}
    \centering
    \includegraphics[width=0.95\linewidth]{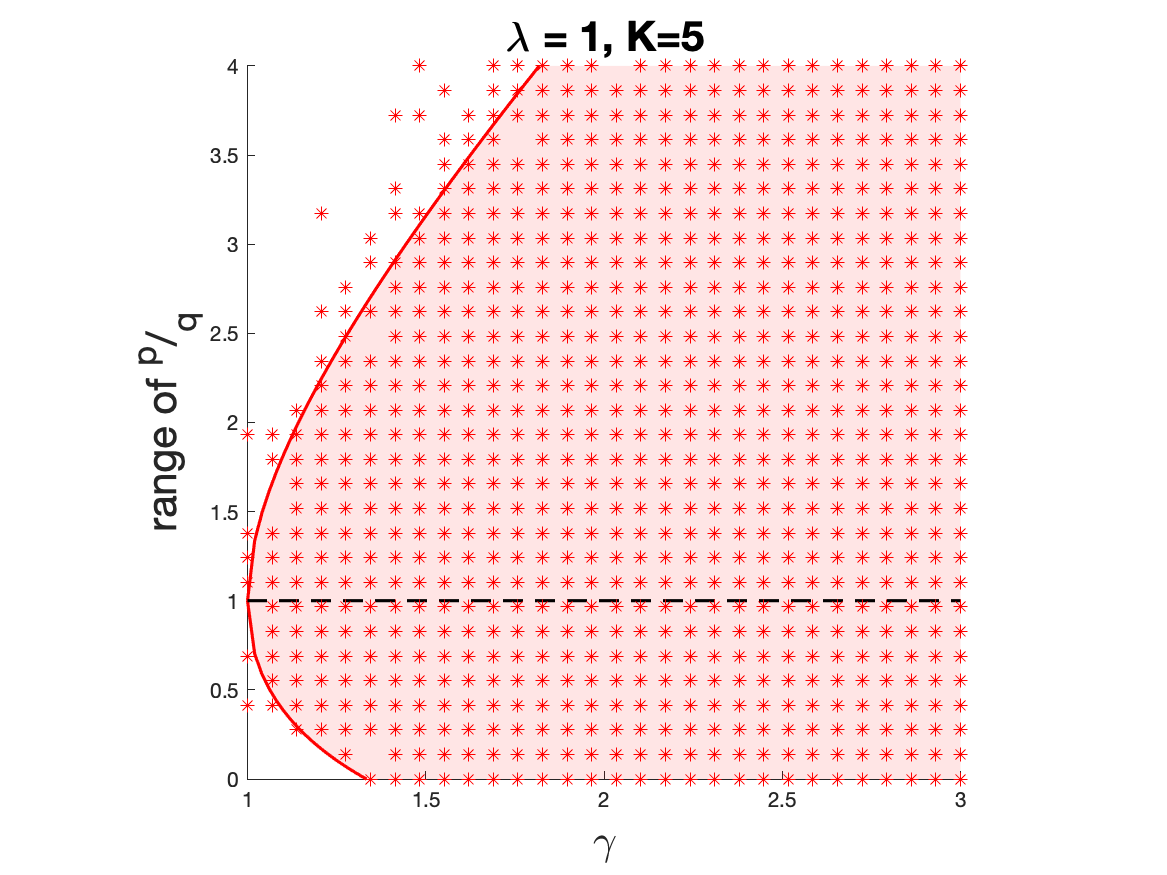}
    \caption{$u(\gamma)$ and $l(\gamma)$ for balanced groups. The shaded area is obtained by theory. Marks show positive relative effects in simulations.}
    \label{fig:sim_vs_theory(lambda1_k5)}
\end{minipage}
\hfill
\begin{minipage}{0.49\linewidth}
    \centering
    \includegraphics[width=0.95\linewidth]{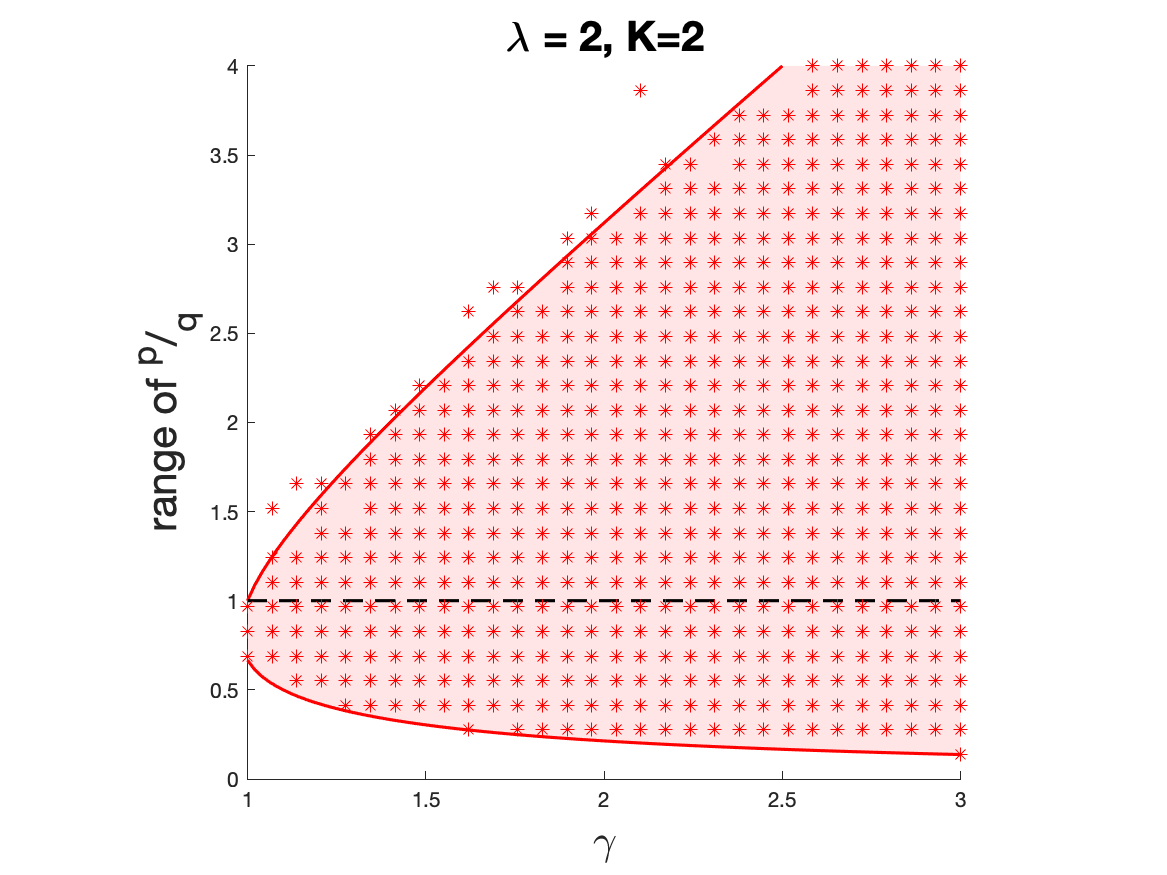}
    \caption{$u(\gamma)$ and $l(\gamma)$ for unbalanced groups. The shaded area is obtained by theory. Marks show positive relative effects in simulations.}
    \label{fig:sim_vs_theory(lambda2_k2)}
\end{minipage}
\end{figure}

Next, we study the Jackson-Rogers model's convergence through simulations. Figure~\ref{fig:jr_convergence(lin)} shows network integration of a dynamic graph consisting of two groups evolving with the Jackson-Rogers model. The dashed lines show the integration in equilibrium predicted by Theorem~\ref{thm:identity_1}. The solid and dotted lines show the average integration of repeated simulations for balanced and unbalanced networks. To show that behavior in equilibrium is independent of the initial network, we have run simulations for two cases: a completely segregated initial network and a fully connected initial network. It seems Theorem~\ref{thm:identity_1} can robustly predict the network's behavior for various model parameters, regardless of the initial network and distribution of groups. Further, Figure~\ref{fig:jr_convergence(log)} shows the residual to equilibrium on a logarithmic scale. The dashed lines correspond to the convergence rate $O(t^{-\frac{N_S + N_D}{N}})$. One can see the proposed upper bound on the convergence rate also matches the behavior of the network in simulations.

\begin{figure}[t]
\begin{minipage}{0.49\linewidth}
    \centering
    \includegraphics[width=0.95\linewidth]{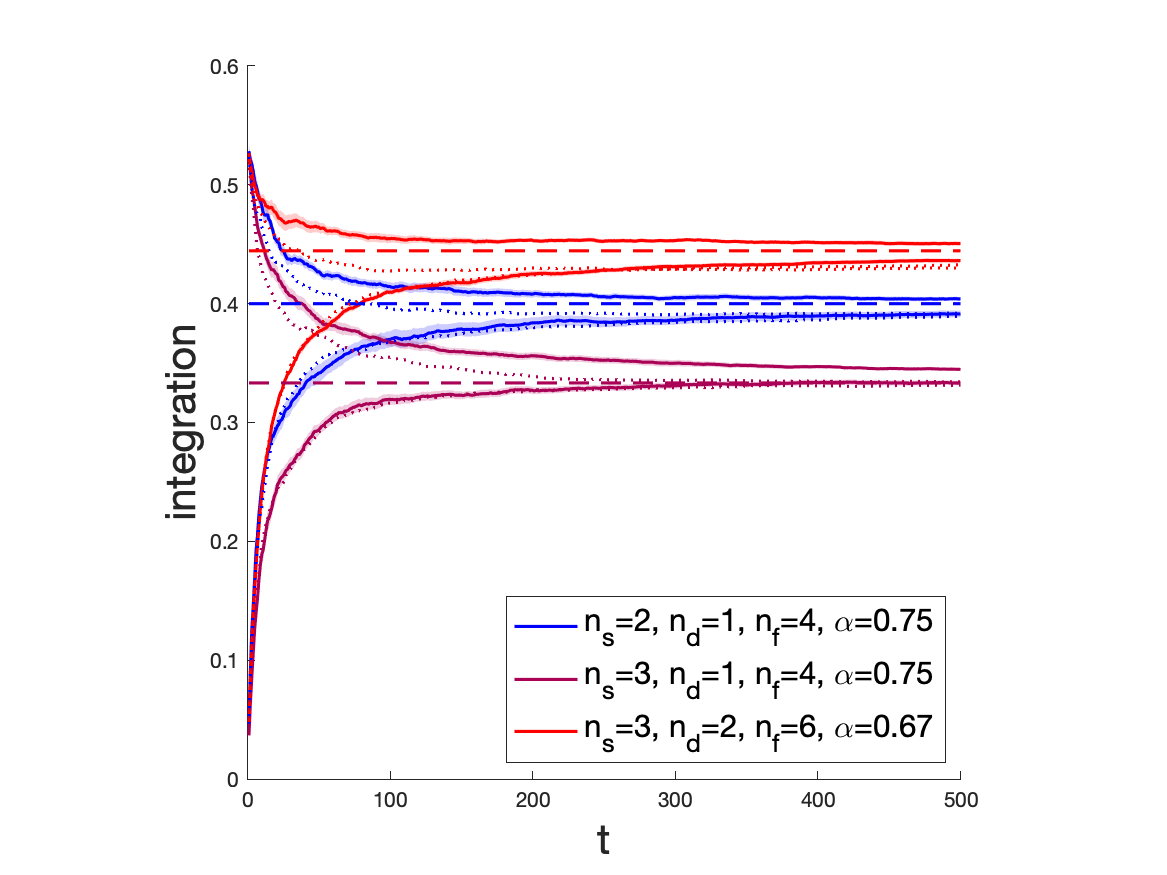}
    \caption{Convergence of a Jackson-Rogers network. Dashed lines show predicted behavior in equilibrium by Theorem~\ref{thm:identity_1}.}
    \label{fig:jr_convergence(lin)}
\end{minipage}
\hfill
\begin{minipage}{0.49\linewidth}
    \centering
    \includegraphics[width=0.95\linewidth]{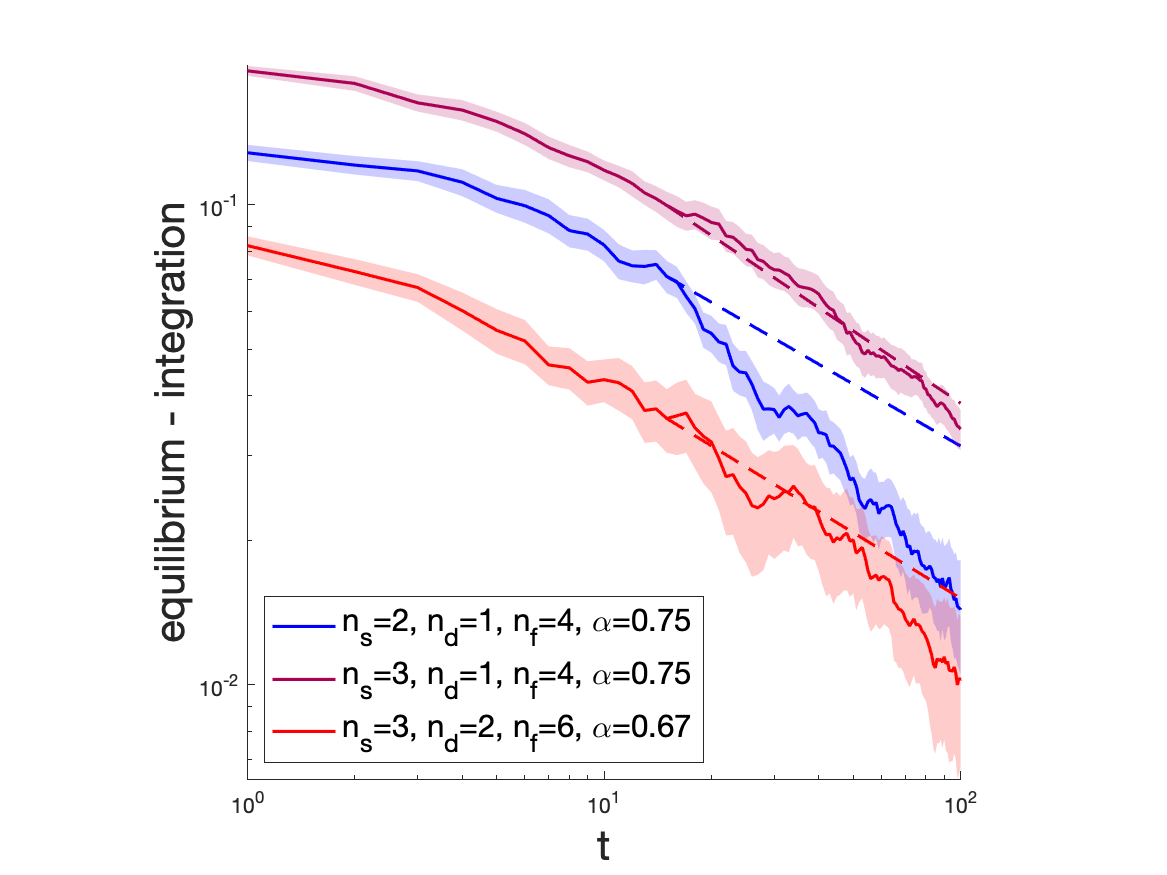}
    \caption{The residual to reach the predicted equilibrium. Dashed lines show predicted upper bound on the convergence rate.}
    \label{fig:jr_convergence(log)}
\end{minipage}
\end{figure}

\end{document}